\documentclass[conference]{IEEEtran}
\ifCLASSINFOpdf
\else
\fi
\usepackage{url}
\usepackage{mathrsfs,mathdots,mathtools,amsthm,amssymb,centernot,accents}
\usepackage{nicefrac}
\usepackage{stackengine} % for the \ubar command I have defined

\usepackage{hyperref}  %for references
\hypersetup{colorlinks,
            linkcolor=blue,
            citecolor=blue,
            urlcolor=black,
            anchorcolor=blue,
            filecolor=blue,
            plainpages=false,
            breaklinks=true}
\usepackage[nameinlink]{cleveref}

\usepackage{subcaption}

\usepackage[inline]{enumitem}
\setlist[enumerate]{
  leftmargin=*, 
  label=(\roman*)
}

\usepackage{bm}	%Command \bm{make bold}
\usepackage{multicol}
\usepackage[dvipsnames]{xcolor}
\usepackage{dirtytalk}
\usepackage{color, soul}
\usepackage{lipsum}
\usepackage{booktabs}

\usepackage[style=ieee,maxnames=2, minnames=1, maxbibnames=99,citestyle=numeric-comp, url=true]{biblatex}

\usepackage{algorithm}
\usepackage{algpseudocode}
\newcommand{\bE}{\ensuremath{\mathbb{E}}}

\newcommand{\bP}{\ensuremath{\mathbb{P}}}

\newcommand{\cA}{\ensuremath{\mathcal{A}}}
\newcommand{\cB}{\ensuremath{\mathcal{B}}}

\newcommand{\cE}{\ensuremath{\mathcal{E}}}

\newcommand{\cG}{\ensuremath{\mathcal{G}}}

\newcommand{\cJ}{\ensuremath{\mathcal{J}}}

\newcommand{\cL}{\ensuremath{\mathcal{L}}}

\newcommand{\cX}{\ensuremath{\mathcal{X}}}
\newcommand{\cY}{\ensuremath{\mathcal{Y}}}
\newcommand{\cZ}{\ensuremath{\mathcal{Z}}}

\newcommand{\ind}{\ensuremath{\mathbf{1}}}
\newcommand{\Ren}{R\'enyi }

\newcommand\ubar[1]{\stackunder[1.2pt]{$#1$}{\rule{.8ex}{.075ex}}}

\DeclarePairedDelimiter\abs{\lvert}{\rvert}%
\newtheorem{theorem}{Theorem}
\newtheorem{definition}{Definition}

\theoremstyle{remark}
\newtheorem{cor}{Corollary}

\newtheorem{lemma}{Lemma}

\newtheorem{example}{Example}

\allowdisplaybreaks

\definecolor{dark_green}{RGB}{6,64,43}

\begin{document}
%
% paper title
% Titles are generally capitalized except for words such as a, an, and, as,
% at, but, by, for, in, nor, of, on, or, the, to and up, which are usually
% not capitalized unless they are the first or last word of the title.
% Linebreaks \\ can be used within to get better formatting as desired.
% Do not put math or special symbols in the title.
\title{Information Leakage Envelopes}

% author names and affiliations
% use a multiple column layout for up to three different
% affiliations

\author{\IEEEauthorblockN{Sara Saeidian}
\IEEEauthorblockA{Inria, Palaiseau, France\\ KTH Royal Institute of Technology, Stockholm, Sweden\\ saeidian@kth.se}
\and
\IEEEauthorblockN{Carlos Pinzón}
\IEEEauthorblockA{Inria, École Polytechnique\\ Palaiseau, France\\ carlos.pinzon@inria.fr}
\and
\IEEEauthorblockN{Catuscia Palamidessi}
\IEEEauthorblockA{Inria, École Polytechnique\\ Palaiseau, France\\ catuscia.palamidessi@inria.fr}
\thanks{The work of Sara Saeidian was supported by the Swedish Research Council (VR) under grant 2024-06615. The work of Catuscia Palamidessi was supported by the ELSA project in the HORIZON EUROPE Framework Programme (project number 101070617).} 
}

\IEEEoverridecommandlockouts
% conference papers do not typically use \thanks and this command
% is locked out in conference mode. If really needed, such as for
% the acknowledgment of grants, issue a \IEEEoverridecommandlockouts
% after \documentclass

% for over three affiliations, or if they all won't fit within the width
% of the page, use this alternative format:
% 
%\author{\IEEEauthorblockN{Michael Shell\IEEEauthorrefmark{1},
%Homer Simpson\IEEEauthorrefmark{2},
%James Kirk\IEEEauthorrefmark{3}, 
%Montgomery Scott\IEEEauthorrefmark{3} and
%Eldon Tyrell\IEEEauthorrefmark{4}}
%\IEEEauthorblockA{\IEEEauthorrefmark{1}School of Electrical and Computer Engineering\\
%Georgia Institute of Technology,
%Atlanta, Georgia 30332--0250\\ Email: see http://www.michaelshell.org/contact.html}
%\IEEEauthorblockA{\IEEEauthorrefmark{2}Twentieth Century Fox, Springfield, USA\\
%Email: homer@thesimpsons.com}
%\IEEEauthorblockA{\IEEEauthorrefmark{3}Starfleet Academy, San Francisco, California 96678-2391\\
%Telephone: (800) 555--1212, Fax: (888) 555--1212}
%\IEEEauthorblockA{\IEEEauthorrefmark{4}Tyrell Inc., 123 Replicant Street, Los Angeles, California 90210--4321}}

% \author{Anonymous Authors}

% make the title area
\maketitle

% As a general rule, do not put math, special symbols or citations
% in the abstract
\begin{abstract}
We study privacy guarantees in the framework of \emph{pointwise maximal leakage} (PML) that satisfy two requirements: they are robust under post-processing and upper bound the failure probability, i.e., the probability that the information leakage exceeds a given threshold. We first examine two candidate definitions inspired by (approximate) differential privacy and show that neither one satisfies both requirements simultaneously. We then introduce the notion of the \emph{PML envelope}, which quantifies the largest amount of information leakage about a secret after arbitrary post-processing of a mechanism’s output. By construction, the PML envelope satisfies both requirements. We discuss basic structural properties of the envelope, such as monotonicity, and derive general upper and lower bounds. We further analyze the envelope for two widely used privacy mechanisms: the PML-extremal mechanisms in the high-privacy regime and randomized response. Overall, this work establishes the PML envelope as a natural and operationally meaningful definition for providing privacy guarantees that are preserved under arbitrary downstream transformations.
\end{abstract}

% no keywords

% For peer review papers, you can put extra information on the cover
% page as needed:
% \ifCLASSOPTIONpeerreview
% \begin{center} \bfseries EDICS Category: 3-BBND \end{center}
% \fi
%
% For peerreview papers, this IEEEtran command inserts a page break and
% creates the second title. It will be ignored for other modes.
\IEEEpeerreviewmaketitle

\section{Introduction}
The introduction of \emph{differential privacy} (DP) \cite{dworkCalibratingNoiseSensitivity} marked a major advance in privacy-preserving technologies. Rather than focusing on defenses against specific privacy attacks, such as re-identification or attribute disclosure, DP frames privacy as an inherent property of a data-processing system. Central to this perspective is the concept of \emph{privacy loss}~\cite{dworkAlgorithmicFoundationsDifferential2014}, defined as
\begin{equation*}
L_{x,x'}(Y) \coloneqq \log \frac{P_{Y \mid X = x}(Y)}{P_{Y \mid X = x'}(Y)},
\end{equation*}
where $x$ and $x'$ denote two possible values of the sensitive input (the \emph{secret}) and $P_{Y \mid X}$ is the mechanism that produces the released information $Y$. The privacy loss random variable captures how much evidence an observation $Y$ provides for distinguishing between $x$ and $x'$. 
% The goal of DP is then to allow making useful inferences about the input $X$ while controlling the privacy loss. 

The original formulation of DP, known as $\varepsilon$-DP, imposes a uniform upper bound on the privacy loss by a constant $\varepsilon > 0$ across all relevant pairs of inputs, that is, $L_{x,x'}(Y) \leq \varepsilon$~\cite{dworkCalibratingNoiseSensitivity}. This strong guarantee often requires adding noise with Laplace or geometric distributions. While these mechanisms provide effective privacy protection, Gaussian noise is often more appealing from a statistical standpoint. The faster tail decay of the Gaussian distribution typically translates into improved utility, and properties such as closedness under convolution simplify the theoretical analysis (since the sum of Gaussians remains Gaussian). These properties motivated the search for relaxations of $\varepsilon$-DP that allow the use of Gaussian noise.

A natural way to relax $\varepsilon$-DP is to allow the privacy loss to exceed $\varepsilon$ with a small probability. This idea led to the notion of \emph{probabilistic DP}~\cite{machanavajjhala2008privacy}, which requires 
\begin{equation*}
    \bP \{L_{x,x'}(Y) > \varepsilon \} \leq \delta
\end{equation*}
for some $\delta \in (0,1)$. Despite its intuitive appeal, this definition did not gain widespread use due to a fundamental limitation: probabilistic DP is not \emph{closed under post-processing}. That is, an adversary may apply a function to the output of a mechanism satisfying probabilistic DP (without access to the secret) to produce a new outcome that no longer satisfies the original guarantee. Closedness under post-processing is often treated as an axiom for privacy definitions and is motivated by the data-processing inequality in information theory~\cite{kifer2012axiomatic}. Intuitively, more processing cannot increase the information available about the secret, and therefore should preserve privacy guarantees. As a result, probabilistic DP was largely set aside in favor of \emph{approximate differential privacy} (ADP), also known as $(\varepsilon,\delta)$-DP. A mechanism satisfies $(\varepsilon, \delta)$-DP if
\begin{equation}
\label{eq:adp_intro}
    P_{Y \mid X=x}(\cA) \leq e^\varepsilon P_{Y \mid X=x'}(\cA) + \delta, 
\end{equation}
for all (measurable) sets $\cA$. This definition is closed under post-processing, accommodates the use of Gaussian noise, and also enables \emph{advanced composition}~\cite{dworkBoostingDifferentialPrivacy2010}. 

The parameter $\delta \in (0,1)$ in approximate differential privacy is often informally interpreted as restricting the \emph{failure probability} at level $\varepsilon$, i.e., $\mathbb{P}\{ L_{x,x'}(Y) > \varepsilon \}$. This interpretation is, however, somewhat loose. One can show that $\delta$ merely provides a \emph{lower bound} on this probability, and in fact, \cite{meiser2018approximate} constructed an example in which $\mathbb{P}\{ L_{x,x'}(Y) > \varepsilon \}$ was much larger than the $\delta$ in ADP. Instead, the parameter $\delta$ in ADP upper bounds the failure probability at
larger thresholds, such as $2\varepsilon$ or, more generally, $k\varepsilon$ (see~\eqref{eq:adp_tail_k} for the precise statement). Note that while a constant-factor increase in $\delta$ is generally harmless (since $\delta$ is chosen to be cryptographically small), a constant-factor increase in $\varepsilon$ exponentially weakens the privacy guarantee as \eqref{eq:adp_intro} depends on $e^\varepsilon$. 

% That said, this difficulty in interpreting $\delta$ as a failure probability has not posed major problems in practice. In most modern systems, privacy loss is not directly tracked using ADP. Instead, one typically relies on accounting frameworks based on Rényi DP~\cite{mironovRenyiDifferentialPrivacy2017a}, concentrated DP~\cite{bunConcentratedDifferentialPrivacy2016a}, or Gaussian DP~\cite{dongGaussianDifferentialPrivacy2022}, which control the distribution of the privacy loss random variable. The resulting
% guarantees are often converted to $(\varepsilon,\delta)$-DP, but this is primarily done for reporting and comparison purposes.

Interestingly, the difficulties associated with defining relaxed privacy guarantees based on a probability of failure are not unique to DP, but have been observed in a different framework based on a privacy measure called \emph{pointwise maximal leakage} (PML) \cite{saeidian2023pointwise_it}. PML is a measure with several attractive properties. Privacy guarantees based on PML allow the release of population-level features of the data while protecting its nuanced and instance-dependent features~\cite{inferential_privacy}. Moreover, connections between PML-based privacy and DP have been established: $\varepsilon$-DP is equivalent to bounding the PML of every record in a database under all product distributions, i.e., distributions under which records are independent~\cite{inferential_privacy}. 

The PML framework is built around a central random quantity, the \emph{information leakage random variable}, which can be expressed as 
\begin{equation*}
    \ell(X \to Y) = \log \, \max_x \, \frac{P_{X \mid Y}(x \mid Y)}{P_X(x)},
\end{equation*}
where $P_{X \mid Y}$ is the posterior distribution of $X$ given $Y$. Different notions of privacy correspond to placing different restrictions on this quantity. In particular, the guarantee known as $\varepsilon$-PML imposes a uniform upper bound on the information leakage, in direct analogy with $\varepsilon$-DP.

In \cite{saeidian2023pointwise_it}, the authors proposed a relaxation of $\varepsilon$-PML based on bounding the failure probability at level $\varepsilon$, that is 
\begin{equation*}
    \bP \{ \ell (X \to Y) > \varepsilon \}.
\end{equation*}
However, it turns out that much like probabilistic DP, this relaxation is not closed under post-processing. To address this issue, the authors introduced an alternative privacy guarantee based on bounding the information leaked to \emph{events} (i.e., subsets of the output space). This notion, however, is not directly comparable to the failure probability: depending on the setup, either quantity may dominate the other \cite{saeidian2023pointwise_it}.

In this paper, we undertake the task of defining a relaxation of $\varepsilon$-PML that satisfies two desirable properties:
\begin{enumerate}[leftmargin=*]
    \item the slack parameter $\delta$ should provide an \emph{upper bound} on the failure probability at level $\varepsilon$, and
    \item the resulting privacy guarantee should be closed under post-processing. 
\end{enumerate}
As the preceding discussion illustrates, existing relaxations typically satisfy one of these properties, and achieving both properties simultaneously appears to be non-trivial. 

\subsection{Contributions}
We first examine two relaxations of $\varepsilon$-PML inspired by ADP. Although ADP itself does not satisfy property~(i), this investigation is nevertheless informative, as it reveals the types of PML-based definitions that arise when one follows the ADP design philosophy. Of the two candidate definitions, the first again fails to be post-processing safe (Theorem~\ref{thm:dp_like_deltas}). The second is post-processing safe, but does not admit a clear or consistent relationship with the failure probability. Consequently, neither definition satisfies both properties~(i) and~(ii).

Our main contribution is introducing the concept of the \emph{PML envelope}. The key idea is to evaluate the failure probability after all possible downstream transformations, instead of just at the original output $Y$; in other words, to \say{close} the failure probability. First, we define  
\begin{equation*}
    \delta_c(\varepsilon) \coloneqq \sup_{Z} \; \bP \{\ell(X \to Z) > \varepsilon\}, \quad \varepsilon >0,
\end{equation*}
where the supremum is over all deterministic and randomized functions $Z$ of $Y$. This quantity represents the largest probability that the information leakage exceeds $\varepsilon$ after arbitrary post-processing. The PML envelope is then defined as
\begin{equation*}
    \varepsilon_c(\delta) \coloneqq \inf\{\varepsilon > 0 : \delta_c(\varepsilon) \le \delta\}, \quad \delta \in (0,1),
\end{equation*}
that is, the smallest leakage threshold $\varepsilon$ for which the closed failure probability is at most $\delta$. By definition, the PML envelope satisfies both properties (i) and (ii) above. We provide equivalent characterizations of $\varepsilon_c(\delta)$ in terms of the cumulative distribution function (CDF) of the information leakage random variable $\ell(X \to Y)$ (Theorem~\ref{thm:comp_eps}). We also establish basic properties of $\varepsilon_c$, including monotonicity and lower semi-continuity (Lemma~\ref{lemma:properties_envelope}). 

In general, computing the PML envelope exactly may be difficult. For this reason, we derive general upper and lower bounds on $\varepsilon_c$. The upper bound (Theorem~\ref{thm:upper_bound_envelope}) is expressed in terms of the \emph{multiplicative Bayes capacity}~\cite{alvim2014additive} (also known as \emph{maximal leakage}~\cite{issaOperationalApproachInformation2020}), which is a well-studied quantity in the quantitative information flow literature~\cite{alvim2020science}. Lower bounds are obtained by restricting the class of admissible post-processings. In particular, we derive a lower bound based on \emph{binary} post-processings, which is simple and efficiently computable. This lower bound coincides with the event-based privacy guarantee introduced in~\cite{saeidian2023pointwise_it}; we discuss this connection in detail in Appendix~\ref{sec:events}.

Finally, we illustrate how the PML envelope can be computed and bounded by analyzing two canonical mechanisms. 
The first is the class of \emph{PML-extremal} mechanisms in the high-privacy regime~\cite{grosseExtremalMechanismsPointwise2024}. These mechanisms are the utility-optimal solutions to a broad class of optimization problems under the $\varepsilon$-PML constraint with sufficiently small $\varepsilon >0$. For these mechanisms, we characterize the envelope exactly (Theorem~\ref{thm:pml_ext}) and show that $\varepsilon_c(\delta) = \varepsilon$ for all $\delta \in (0,1)$. 

The second is the \emph{randomized response} mechanism~\cite{warnerRandomizedResponseSurvey1965}, a standard and widely used tool for guaranteeing local differential privacy~\cite{kairouzExtremalMechanismsLocal2016}. Randomized response is a natural benchmark due to its simplicity, in particular, its symmetric structure, as well as its widespread use. For this mechanism, we derive upper and lower bounds on the PML envelope (Theorem~\ref{thm:k-rr}). Comparing these bounds to the corresponding ADP guarantees of randomized response shows that the two frameworks can behave quite differently, with no consistent relationship in general. This is not surprising because in the PML envelope, $\delta$ represents the failure probability, whereas in ADP it appears as an additive slack parameter.

\section{Background}
\label{sec:background}
\subsection{Notation}
Uppercase letters denote random variables, lowercase letters denote their realizations, and calligraphic letters denote sets. All sets are assumed to be finite. We use $X$ to denote a random variable containing sensitive information, also referred to as the \emph{secret}. Its probability distribution is denoted by $P_X$, and its domain by $\mathcal X$. A privacy mechanism (or simply, a mechanism) is specified by a conditional probability distribution $P_{Y \mid X}$, which takes $X$ as input and produces an output $Y$ with domain $\mathcal Y$ and (marginal) distribution $P_Y$. The joint distribution of $(X,Y)$ is denoted by $P_{XY}$. 

Random variables $X$, $Y$, and $Z$ are said to form a Markov chain $X - Y - Z$ if $X$ and $Z$ are conditionally independent given $Y$, that is, $P_{XZ \mid Y} = P_{X \mid Y} \times P_{Z \mid Y}$. We write $P_{Z \mid X} = P_{Z \mid Y} \circ P_{Y \mid X}$ to denote \emph{marginalization}, meaning that
\begin{equation*}
P_{Z \mid X=x}(z) = \sum_{y \in \mathcal Y} P_{Z \mid Y=y}(z)\, P_{Y \mid X=x}(y), \quad x \in \cX, z \in \cZ. 
\end{equation*}
For a set $\mathcal E$, $\ind_{\mathcal E}$ denotes its indicator function. For a positive integer $k$, we write $[k] = \{1,\dots,k\}$. 

%%%%%%%%%%%%%%%
%%%%%%%%%%%%%%%
\subsection{Differential Privacy}
% Differential privacy (DP) was introduced as a framework for releasing statistics about databases while protecting the privacy of individual entries~\cite{dworkCalibratingNoiseSensitivity,dworkAlgorithmicFoundationsDifferential2014}. 
The original definition of DP states that an adversary should not be able to distinguish between \emph{neighboring} datasets, i.e., databases that differ in a single record, based on the statistics released from each~\cite{dworkCalibratingNoiseSensitivity,dworkAlgorithmicFoundationsDifferential2014}. Later, DP was extended to decentralized settings, where data is perturbed before collection. This variant is known as \emph{local differential privacy} (LDP)~\cite{kasiviswanathan2011can,duchi2013LDPminmaxDEF}. In this work, we adopt the local model for all DP definitions. This choice is made without loss of generality, since switching between the two models amounts to redefining what it means for two inputs to be \say{neighbors.} By working in the local model, we abstract away the structure of the secret and instead focus on our main objective: to analyze how various definitions behave under post-processing. 

Given $x, x' \in \cX$, let 
\begin{equation*}
    L_{x,x'} (Y) \coloneqq \log \frac{P_{Y \mid X=x}(Y)}{P_{Y \mid X=x'}(Y)},
\end{equation*}
denote the \emph{privacy loss random variable} of DP~\cite{dworkAlgorithmicFoundationsDifferential2014}. The simplest DP definition imposes a uniform bound on this loss. 

\begin{definition}[Pure DP~\cite{dworkCalibratingNoiseSensitivity}]
Let $\varepsilon > 0$. A privacy mechanism $P_{Y \mid X}$ is said to satisfy $\varepsilon$-DP if $L_{x,x'} (y) \leq \varepsilon$ for all $x,x' \in \cX$ and all $y \in \cY$.   
\end{definition}
A standard relaxation of pure DP, known as approximate DP (ADP)~\cite{dworkOurDataOurselves2006a}, allows an additive slack $\delta \in (0,1)$ in the event-wise comparison of the conditional distributions. 
\begin{definition}[Approximate DP~\cite{dworkAlgorithmicFoundationsDifferential2014}]
\label{def:approx_dp}
Let $\varepsilon > 0$ and $\delta \in (0,1)$. A privacy mechanism $P_{Y \mid X}$ is said to satisfy $(\varepsilon, \delta)$-DP if for all $x,x' \in \cX$ and all sets $\cE \subset \cY$ we have 
\begin{equation}
\label{eq:approx_ldp}
    P_{Y \mid X=x}(\cE) \leq e^\varepsilon P_{Y \mid X=x'}(\cE) + \delta. 
\end{equation}
\end{definition}
The common interpretation of $(\varepsilon, \delta)$-DP is that it allows the privacy guarantees of $\varepsilon$-DP to fail with probability $\delta$. However, as pointed out in~\cite{meiser2018approximate}, this interpretation is misleading: the parameter $\delta$ is, in fact, a \emph{lower bound} on the worst-case failure probability. To demonstrate this, we find the smallest $\delta$ that satisfies \eqref{eq:approx_ldp} for a fixed $\varepsilon$. This function, called the \emph{privacy profile}~\cite{ballePrivacyAmplificationSubsampling2018} and denoted by $\delta^*(\varepsilon, P_{Y \mid X})$, can be expressed as
\begin{equation*}
    \delta^*(\varepsilon, P_{Y \mid X}) = \max_{x,x' \in \cX} \, \max_{\cE \subset \cY} \; \Big( P_{Y \mid X=x}(\cE) - e^\varepsilon P_{Y \mid X=x'}(\cE) \Big). 
\end{equation*}
Then, $P_{Y \mid X}$ satisfies $(\varepsilon, \delta)$-DP for all $\delta^*(\varepsilon) \leq \delta < 1$. It is not difficult to see that $\delta^*$ can also be expressed in the following more intuitive form~\cite[Lemma~2.8]{canonneDiscreteGaussianDifferential2022}: 
\begin{align}
\label{eq:delta_star_dp}
    &\delta^*(\varepsilon, P_{Y \mid X})\nonumber\\
    &= \max_{x,x'} \bE_{Y \sim P_{Y \mid X=x}} \left [\max \left\{0, 1 - \frac{\exp(\varepsilon)}{\exp(L_{x,x'} (Y))} \right\} \right]\\[0.5em]
    &= \max_{x,x'} \bE_{Y \sim P_{Y \mid X=x}} \!\left [\ind_{\left\{L_{x,x'} > \varepsilon \right\}}(Y) \!\left(1 - \frac{\exp(\varepsilon)}{\exp(L_{x,x'} (Y))} \right) \!\right].\nonumber 
\end{align}
Intuitively, the term
\begin{equation*}
    1 - \frac{\exp(\varepsilon)}{\exp(L_{x,x'}(y))},
\end{equation*}
is a penalty applied to outcomes $y$ with privacy loss larger than $\varepsilon$. The greater the deviation of privacy loss from $\varepsilon$, the larger this penalty becomes, reflecting that such outcomes contribute more significantly to the overall privacy risk.
Observe that $0 \leq 1 - \frac{\exp(\varepsilon)}{\exp(L_{x,x'} (y))} \leq 1$ on the set $\left\{L_{x,x'} > \varepsilon \right\}$, so we have
\begin{align*}
    \delta^*(\varepsilon, P_{Y \mid X}) &\leq \max_{x,x' \in \cX} \; \bE_{Y \sim P_{Y \mid X=x}} \left [\ind_{\left\{L_{x,x'} > \varepsilon \right\}}(Y) \right]\\[0.5em]
    &= \max_{x,x' \in \cX} \; P_{Y \mid X=x} \left\{L_{x,x'}(Y) > \varepsilon \right\}.
\end{align*}
The quantity $\max_{x,x'} \; P_{Y \mid X=x} \left\{L_{x,x'}(Y) > \varepsilon \right\}$ represents the worst-case failure probability of DP, that is, the (maximum) probability that the privacy loss exceeds $\varepsilon$. Hence, the above derivation shows that the value of $\delta$ in ADP merely lower bounds this quantity. It is nevertheless true that $(\varepsilon, \delta)$-DP implies the weaker tail bound 
\begin{equation*}
    P_{Y \mid X=x} \big\{L_{x,x'}(Y) > 2 \varepsilon \big\} \leq \frac{\delta}{1 - e^{-\varepsilon}},
\end{equation*}
for all $x,x'$~\cite[Lemma 3.3]{kasiviswanathanSemanticsDifferentialPrivacy2014} (see also \cite{jonanyCorrectingMisconceptionDifferential2022}). More generally, $(\varepsilon, \delta)$-DP implies
\begin{equation}
\label{eq:adp_tail_k}
    \max_{x,x'} \; P_{Y \mid X=x} \big\{L_{x,x'}(Y) > k \varepsilon \big\} \leq \frac{\delta}{1 - e^{-(k-1)\varepsilon}},
\end{equation}
for all integers $k \geq 2$.  
While some recent works have taken care to be more precise in their description of ADP, for instance, by stating that ADP restricts the failure probability up to a scaling of the parameters, the misconception remains widespread, especially in more applied contexts where the focus is not on the theoretical nuances.

The above discussion naturally raises the question of whether we could define a DP variant by explicitly upper bounding the failure probability. Such a definition was, in fact, proposed by~\textcite{machanavajjhala2008privacy} and is known as \emph{probabilistic DP}.
\begin{definition}[Probabilistic DP~\cite{machanavajjhala2008privacy}]
\label{def:probabilitist_dp}
Let $\varepsilon > 0$ and $\delta \in (0,1)$. A privacy mechanism $P_{Y \mid X}$ is said to satisfy $(\varepsilon, \delta)$-probabilistic DP if for all $x, x' \in \cX$ we have $P_{Y \mid X=x} \left\{L_{x,x'}(Y) > \varepsilon \right\} \leq \delta$. 
\end{definition}

While probabilistic DP is arguably more intuitive than ADP, it is rarely used in practice. This is primarily because probabilistic DP may not be preserved after post-processing~\cite{kiferAxiomaticViewStatistical2012,meiser2018approximate}, in contrast to ADP, which is post-processing safe~\cite{dworkAlgorithmicFoundationsDifferential2014}. This property has contributed to ADP becoming the de facto standard in the literature. 

In the following section, we discuss similar challenges in the framework of pointwise maximal leakage.

\subsection{Pointwise Maximal Leakage} 
PML~\cite{saeidian2023pointwise_it} is a recent notion of privacy defined using concepts from \emph{quantitative information flow}~\cite{alvim2020science}. PML quantifies the inference risk posed by a broad class of adversaries. Its threat model can be described as follows: consider an adversary who seeks to maximize a non-negative gain function $g$ by producing a guess $W$ of the private variable $X$. The gain function encodes the adversary's objective and can capture a wide range of privacy attacks, including membership and attribute inference~\cite{saeidian2023pointwise_it}. For an output $y \in \mathcal Y$, PML measures information leakage as the ratio between the adversary’s expected gain after observing $y$ and the expected gain before observing $y$. Then, to obtain a robust and attack-agnostic notion of leakage, this ratio is maximized over all nonnegative gain functions. Formalizing this idea leads to the following definition for PML.
\begin{definition}[PML~\cite{saeidian2023pointwise_it}]
\label{def:PML}
Suppose $X \sim P_X$ and let $Y$ be the random variable induced by the mechanism $P_{Y \mid X}$. The pointwise maximal leakage from $X$ to $y \in \mathcal{Y}$ is defined as
\begin{equation}
\label{eq:g-leakage}
    \ell_{P_{XY}}(X \to y) := \log \; \sup_{g} \; \frac{\sup\limits_{P_{W \mid Y}} \mathbb{E}[g(X, W) \mid Y = y]}{\max_{w' \in \mathcal{W}} \mathbb{E}[g(X, w')]},
\end{equation}
where $P_{W \mid Y}$ is the conditional distribution of the adversary's guess $W$ given $Y$. The supremum is over all non-negative measurable functions $g$.
\end{definition}
In this work, both $X$ and $Y$ are assumed to be finite-valued. Under this assumption, it was shown in~\cite{saeidian2023pointwise_it} that PML takes the simpler form
\begin{align*}
    \ell_{P_{XY}}\!(X \!\to \!y) = \log \max_{x \in \mathcal X} \frac{P_{X \mid Y=y}(x)}{P_X(x)} = \log \max_{x \in \mathcal X} \frac{P_{Y \mid X=x}(y)}{P_Y(y)},
\end{align*}
where $P_{X \mid Y}$ denotes the posterior distribution of $X$ given $Y$. It is straightforward to see that PML satisfies the bounds 
\begin{equation}
\label{eq:pml_bounds}
    0 \leq \ell_{P_{XY}}(X \to y) \leq \log \frac{1}{\min_{x \in \cX} P_X(x)},
\end{equation}
for all $y \in \cY$. 

In the information theory literature, the quantity 
\begin{equation*}
    i_{P_{XY}}(x;y) = \log \frac{P_{XY}(x,y)}{P_X(x) P_Y(y)}, \quad x \in \cX, y \in \cY,
\end{equation*}
is commonly referred to as the \emph{information density} of $P_{XY}$. PML can also be expressed as 
\begin{equation*}
    \ell_{P_{XY}}(X \to y) = \max_{x \in \cX} \; i_{P_{XY}}(x;y).
\end{equation*}
Note that, unlike DP, PML depends on the prior distribution $P_X$ and is therefore a property of the joint distribution $P_{XY}$. When the joint distribution is clear from context, we omit the subscript and write $i(x;y)$ for information density and $\ell(X \to y)$ or simply $\ell(y)$ for PML.

The joint distribution $P_{XY}$ is said to satisfy $\varepsilon$-PML with $\varepsilon>0$ if $\ell(X \to y) \leq \varepsilon$ for all $y \in \mathcal Y$. In~\cite{saeidian2023pointwise_it}, the authors also introduced a relaxation of $\varepsilon$-PML by imposing an upper bound on the tail of $\ell(X \to Y)$.\footnote{The function $\ell(X \to y)$ is defined pointwise for each $y \in \mathcal Y$. Consequently, $\ell(X \to Y)$ is a random variable induced by $Y$.} 

\begin{definition}[Probabilistic PML]
Let $\varepsilon > 0$ and $\delta \in (0,1)$. The joint distribution $P_{XY}$ is said to satisfy $(\varepsilon, \delta)$-probabilistic PML if 
\begin{equation*}
    P_Y \{\ell(X \to Y) > \varepsilon\} \leq \delta. 
\end{equation*}
\end{definition}
Much like probabilistic LDP, probabilistic PML is \emph{not} closed under post-processing. To illustrate this, below we give an example similar to~\cite[Example 7]{saeidian2023pointwise_it}.

\begin{example}
\label{ex:prob_pml_not_post_proc}
Let $X$ be uniformly distributed on $\cX = [4]$. Consider the privacy mechanism 
\begin{equation}
\label{eq:ex_mech}
    P_{Y \mid X} = \begin{bmatrix}
    0 & 0 & 0.5 & 0.5\\[0.5em]
    0 & 0 & 0.5 & 0.5\\[0.5em]
    0 & 0.2 & 0.4 & 0.4\\[0.5em]
    0.2 & 0 & 0.4 & 0.4
    \end{bmatrix}, 
\end{equation}
where $(P_{Y \mid X})_{ij} = P_{Y \mid X=i}(j)$. The outcomes have information leakage 
\begin{gather*}
    \ell_{P_{XY}}(X \to 1) = \ell_{P_{XY}}(X \to 2) = \log 4, \\
    \ell_{P_{XY}}(X \to 3) = \ell_{P_{XY}}(X \to 4) = \log \frac{10}{9}.
\end{gather*}
Since $P_Y(1) = P_Y(2) = \frac{1}{20}$, $P_{XY}$ satisfies $(\log \frac{10}{9}, 0.1)$-probabilistic PML. Now, let $Z = h(Y)$, where 
\begin{equation*}
    h(y) = 
    \begin{cases}
        1 & \text{if } y \in \{1,3\}, \\
        2 & \text{if } y \in \{2,4\},
    \end{cases} 
\end{equation*}
The outcomes of $Z$ are equiprobable and have information leakage $\ell_{P_{XZ}}(X \to 1) = \ell_{P_{XZ}}(X \to 2) = \log \frac{6}{5}$. Since $\frac{6}{5} > \frac{10}{9}$, $P_{XZ} = P_{Z \mid Y} \circ P_{XY}$ does \emph{not} satisfy the original guarantee of $(\log \frac{10}{9}, 0.1)$-probabilistic PML. 
\end{example}
In light of Example~\ref{ex:prob_pml_not_post_proc}, we ask: What alternative definitions can reconcile PML-based privacy guarantees with the post-processing requirement? We explore answers to this question in the subsequent sections.

\section{Two Candidate Definitions of\\ Approximate PML}
\label{sec:candidates}
As a first step, we examine two candidate definitions of approximate PML, inspired by ADP. In particular, we define analogues of~\eqref{eq:delta_star_dp} by replacing the DP privacy loss with either the information leakage random variable or the information density. 

Given $\varepsilon > 0$, define
\begin{align*}
   \psi_1(\varepsilon, P_{XY}) &\coloneqq \bE_{Y \sim P_Y} \left [\max \left\{0, 1 - \frac{\exp(\varepsilon)}{\exp(\ell(Y))} \right\} \right]\\[0.5em]
    &= \bE_{Y \sim P_Y} \left [\ind_{\left\{\ell > \varepsilon \right\}}(Y) \left(1 - \frac{\exp(\varepsilon)}{\exp(\ell(Y))} \right) \right],
\end{align*}
and 
\begin{align*}
    &\psi_2(\varepsilon, P_{XY}) \\
    &\coloneqq \max_{x} \bE_{Y \sim P_{Y \mid X=x}} \left [\max \left\{0, 1 - \frac{\exp(\varepsilon)}{\exp(i(x;Y))} \right\} \right]\\[0.5em]
    &= \max_{x} \bE_{Y \sim P_{Y \mid X=x}} \left [\ind_{\left\{ i(x;\cdot) > \varepsilon \right\}}(Y) \left(1 - \frac{\exp(\varepsilon)}{\exp(i(x;Y))} \right) \right].
\end{align*}
Both definitions follow the same general pattern: they assign a penalty to outcomes where the information leakage $\ell(y)$ or the information density $i(x;y)$ exceeds the threshold $\varepsilon$. Note that $0 \leq 1 - \frac{\exp(\varepsilon)}{\exp(\ell(y))} \leq 1$ when $\ell(y) > \varepsilon$, so $\psi_1$ lower bounds the PML failure probability $P_Y \{\ell(Y) > \varepsilon\}$.

Despite the parallelism in their expressions, these two definitions behave differently under post-processing: $\psi_1$ is not post-processing safe, whereas $\psi_2$ is.
\begin{theorem}
\label{thm:dp_like_deltas}
Let $\varepsilon > 0$.
\begin{enumerate}
\item There exist random variables $X, Y, Z$ satisfying the Markov chain $X-Y-Z$ such that 
\begin{equation*}
   \psi_1(\varepsilon, P_{Z \mid Y} \circ P_{XY}) >\psi_1(\varepsilon, P_{XY}).
\end{equation*}
\item  For all random variables $X, Y, Z$ satisfying the Markov chain $X-Y-Z$ we have
\begin{equation*}
   \psi_2(\varepsilon, P_{Z \mid Y} \circ P_{XY}) \leq\psi_2(\varepsilon, P_{XY}).
\end{equation*}
\end{enumerate}
\end{theorem}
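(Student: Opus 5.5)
For part~(ii) I would first rewrite $\psi_2$ as a hockey-stick divergence. Since $\exp(i(x;y)) = P_{Y\mid X=x}(y)/P_Y(y)$, we get
\begin{equation*}
\psi_2(\varepsilon,P_{XY}) = \max_{x}\sum_{y} \max\bigl\{0,\,P_{Y\mid X=x}(y) - e^{\varepsilon}P_Y(y)\bigr\} = \max_x \max_{\cE\subset\cY}\bigl(P_{Y\mid X=x}(\cE) - e^\varepsilon P_Y(\cE)\bigr),
\end{equation*}
which is the same manipulation that turns the privacy profile into~\eqref{eq:delta_star_dp}. Post-processing is then immediate. Fix $x$ and a set $\cE\subset\cZ$. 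Linearity of marginalization gives $P_{Z\mid X=x}(\cE)-e^\varepsilon P_Z(\cE)=\sum_y P_{Z\mid Y=y}(\cE)\bigl(P_{Y\mid X=x}(y)-e^\varepsilon P_Y(y)\bigr)$. Because $P_{Z\mid Y=y}(\cE)\in[0,1]$, this sum is at most $\sum_y\max\{0,P_{Y\mid X=x}(y)-e^\varepsilon P_Y(y)\}$. Maximizing over $\cE$ and $x$ yields the claim. Here it matters that $P_Z = P_{Z\mid Y}\circ P_Y$, so the "reference" measure is processed by the same kernel.

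For part~(i) I would build an explicit counterexample for each $\varepsilon$, in the spirit of Example~\ref{ex:prob_pml_not_post_proc}, using a merging map $h$. Writing $r(y)=e^{\ell(y)}=\max_x P_{X\mid Y=y}(x)/P_X(x)$, each outcome contributes $P_Y(y)\max\{0,1-e^\varepsilon/r(y)\}$ to $\psi_1$.

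Choose $X$ uniform on $[n]$ with $n>e^\varepsilon$, so that $t\coloneqq ne^{-\varepsilon}>1$. Design $Y$ to have the following outcomes:
\begin{itemize}
\item a "revealing" outcome $y_1$ of mass $p$ whose posterior is a point mass at $x=1$, so $r(y_1)=n$;
\item a "mild" outcome $y_2$ of mass $q_0$ with $r(y_2)=e^\varepsilon$, also maximized at $x=1$, which contributes $0$;
\item additional balancing outcomes whose posteriors are chosen so that the prior stays uniform and each has $r\le e^\varepsilon$, which also contribute $0$.
\end{itemize}
Then $\psi_1(\varepsilon,P_{XY}) = p(1-1/t)$.

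Let $Z$ merge $y_1$ and $y_2$ and leave everything else unchanged. Since both outcomes are maximized at $x=1$, the merged ratio is $r=(q_0e^\varepsilon+pn)/(q_0+p)$. The merged outcome contributes
\begin{equation*}
(q_0+p)\Bigl(1-\frac{(q_0+p)}{q_0+pt}\Bigr) = \frac{(q_0+p)\,p(t-1)}{q_0+pt}.
\end{equation*}
As $p\to0$ this is $p(t-1)+o(p)$. Since $t-1>(t-1)/t$ whenever $t>1$, the processed value exceeds $p(1-1/t)$ for $p$ small enough; in fact the inequality $(q_0+p)t>q_0+pt$ shows it holds exactly for every $p>0$. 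This gives $\psi_1(\varepsilon,P_{Z\mid Y}\circ P_{XY})>\psi_1(\varepsilon,P_{XY})$.

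The main obstacle is part~(i), specifically verifying that such a joint distribution exists. Concretely, one must check that the balancing outcomes can restore a uniform prior while each keeping ratio at most $e^\varepsilon$. I would do this by spreading the deficit mass over outcomes whose posteriors are close to uniform, which is possible by taking their total mass large relative to $p$ and $q_0$. The computation comparing the two contributions is then routine.
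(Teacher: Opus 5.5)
Your proposal is correct. Part~(ii) matches the paper's proof: rewrite $\psi_2$ as $\max_x\max_{\cE}\bigl(P_{Y\mid X=x}(\cE)-e^\varepsilon P_Y(\cE)\bigr)$ and push $P_{Z\mid Y}$ through by linearity, using $P_{Z\mid Y=y}(\cE)\in[0,1]$.

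Part~(i) takes a different route. The paper reuses Example~\ref{ex:prob_pml_not_post_proc} and checks numerically that $\psi_1$ rises from $13/180$ to $2/27$ at the single value $\varepsilon=\log\frac{10}{9}$. This is concrete, and it ties the failure of $\psi_1$ to the same merging map that breaks probabilistic PML. However, it certifies only one $\varepsilon$. For that fixed mechanism the inequality already reverses at $\varepsilon=\log\frac65$, where $\psi_1$ drops from $0.07$ to $0$. Your family covers every $\varepsilon>0$, which is what the statement's quantifier asks for. It also isolates why the increase happens: merging an above-threshold outcome with an at-threshold outcome that shares the same maximizing $x$ strictly raises the penalty, which reduces to $(q_0+p)t>q_0+pt$.

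The existence step you flag as the obstacle is routine. Take $y_2$'s posterior to be uniform off $x=1$, and use a single residual outcome $y_3$ with $P_{XY}(x,y_3)=\frac1n-P_{XY}(x,y_1)-P_{XY}(x,y_2)$. Its ratio is at most $1$ at $x=1$. For $x\neq 1$ it equals $\bigl(1-q_0\frac{n-e^\varepsilon}{n-1}\bigr)/(1-p-q_0)$, which is below $e^\varepsilon$ once $p,q_0$ are small. Taking them small also keeps the residual masses nonnegative.
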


\begin{IEEEproof}
\begin{enumerate}
\item It suffices to construct an example where $\psi_1(\varepsilon, P_{Z \mid Y} \circ P_{XY}) > \psi_1(\varepsilon, P_{XY})$. Recall the setup of Example~\ref{ex:prob_pml_not_post_proc}. Setting $\varepsilon = \log \frac{10}{9}$, a direct calculation yields
\begin{equation*}
    \psi_1(\varepsilon, P_{XY}) = \frac{13}{180} < \frac{2}{27} = \psi_1(\varepsilon, P_{XZ}).
\end{equation*}
Hence, $\psi_1$ can increase under post-processing. 

%%%%%%%%%%%%%%%%%%%%%%%%
\item Our argument mirrors the standard proof of post-processing for ADP. Consider the Markov chain $X-Y-Z$. Fix an arbitrary $x \in \cX$ and observe that
\begin{align*}
    &\bE_{Z \sim P_{Z \mid X=x}} \left [\ind_{\left\{ i(x;\cdot) > \varepsilon \right\}}(Z) \left(1 - \frac{\exp(\varepsilon)}{\exp(i(x;Z))} \right) \right]\\[0.3em]  
    &= \sum_{z : i(x;z) > \varepsilon} \left(1 - \frac{\exp(\varepsilon)}{\exp(i(x;z))} \right) \, P_{Z \mid X=x}(z)\\[0.3em]  
    &= \sum_{z : i(x;z) > \varepsilon} P_{Z\mid X=x}(z) - e^\varepsilon \sum_{z: i(x;z) > \varepsilon} P_Z(z)\\[0.3em]  
    &= P_{Z\mid X=x}\{z : i(x;z) > \varepsilon\} - e^\varepsilon P_Z\{z : i(x;z) > \varepsilon \}. 
\end{align*}
Thus, to prove that $\psi_2$ does not increase under post-processing, it suffices to show that 
\begin{equation*}
    P_{Z \mid X=x}(\cA) - e^\varepsilon P_Z(\cA) \leq \psi_2(\varepsilon, P_{XY}), 
\end{equation*}
for all $x \in \cX$ and arbitrary sets $\cA \subseteq \cZ$. Indeed, for each set $\mathcal A \subseteq \cZ$ we have 
\begin{align*}
    &P_{Z \mid X=x}(\cA) - e^\varepsilon P_Z(\cA)\\[0.3em]  
    &= \sum_{y \in \cY} P_{Z \mid Y=y}(\cA) \Big(P_{Y \mid X=x}(y) - e^\varepsilon P_Y(y) \Big) \\[0.3em]
    &\leq \sum_{y : \, i(x;y) > \varepsilon} P_{Z \mid Y=y}(\cA)  \Big(P_{Y \mid X=x}(y) - e^\varepsilon P_Y(y) \Big)\\[0.3em] 
    &\leq \sum_{y : \, i(x;y) > \varepsilon}  P_{Y \mid X=x}(y) - e^\varepsilon P_Y(y)\\[0.3em]  
    &= \sum_{y : \, i(x;y) > \varepsilon} \left(1 - \frac{\exp(\varepsilon)}{\exp(i(x;y))} \right) \, P_{Y \mid X=x}(y)\\[0.3em]  
    &= \bE_{Y \sim P_{Y \mid X=x}} \left [\ind_{\left\{ i(x;\cdot) > \varepsilon \right\}}(Y) \left(1 - \frac{\exp(\varepsilon)}{\exp(i(x;Y))} \right) \right]\\[0.3em]  
    &\leq \psi_2(\varepsilon, P_{XY}). 
\end{align*}
\end{enumerate}   
\end{IEEEproof}

Thus, Theorem~\ref{thm:dp_like_deltas} establishes that $\varepsilon$-PML can be relaxed with an additive parameter such that the resulting definition is closed under post-processing. It is straightforward to see that $\psi_2$ can be written as 
\begin{equation*}
    \psi_2(\varepsilon, P_{XY}) = \max_{x \in \cX} \, \max_{\cE \subset \cY} \; \Big( P_{Y \mid X=x}(\cE) - e^\varepsilon P_Y(\cE) \Big). 
\end{equation*}

Although $\psi_2$ is closed under post-processing, it does not provide a suitable proxy for the failure probability $P_Y\{\ell(Y) > \varepsilon\}$, since neither quantity bounds the other one in general. To illustrate this, consider again the mechanism $P_{Y \mid X}$ in~\eqref{eq:ex_mech}. We have
\begin{equation*}
   P_Y\{\ell(Y) > \log \tfrac{10}{9}\} = P_Y\{\ell(Y) > \log 3\} = 0.1,
\end{equation*}
while at the same time,
\begin{equation*}
    \psi_2(\log 3, P_{XY}) = 0.05
    \;<\;
    0.1
    \;<\;
    \frac{13}{90}
    = \psi_2\!\left(\log \tfrac{10}{9}, P_{XY}\right).
\end{equation*}

In the next sections, we take a different approach and focus on directly \say{closing} the tail probability $P_Y\{\ell(Y) > \varepsilon\}$.

\section{Closing the Probability of Failure and the PML Envelope}
\label{sec:closing}
We now turn to a natural approach to the post-processing question, one that directly addresses the core issue of probabilistic PML. Instead of defining an additive relaxation or adjusting penalties, we examine the worst-case probability of failure across all possible post-processing mechanisms.

Formally, given a joint distribution $P_{XY}$ and $\varepsilon > 0$, define 
\begin{equation}
\label{eq:closed_delta}
    \delta_c(\varepsilon) \coloneqq \sup_{Z:X-Y-Z} \bP \{\ell(Z) > \varepsilon\},
\end{equation}
where the supremum is taken over all finite random variables $Z$ satisfying the Markov chain $X - Y - Z$, alternatively, all conditional distributions $P_{Z \mid Y}$. Observe that $\delta_c$ quantifies the largest probability that PML exceeds $\varepsilon$ under any downstream transformation, so by definition, it is post-processing safe. 

In \eqref{eq:closed_delta}, we fix $\varepsilon$ and find the largest failure probability. Alternatively, we could fix $\delta \in (0,1)$ and find the \emph{smallest} $\varepsilon$ that holds with probability at least $1 - \delta$ after arbitrary post-processing. Let $Z$ denote a (possibly randomized) function of $Y$, and consider its corresponding leakage random variable $\ell(Z)$. Let 
\begin{equation*}
    C_Z(t) = \bP \{\ell(Z) \leq t\}, \quad t \geq 0,
\end{equation*}
denote the \emph{cumulative distribution function} (CDF) of $\ell(Z)$, and for $s \in (0,1)$, define
\begin{equation*}
    \operatorname{Quant}^{\leftarrow}_Z(s) \coloneqq \inf \bigl \{t \geq 0 : C_Z(t) \geq s \bigr\}.
    \end{equation*}
to be the \emph{left-continuous quantile function} of $\ell(Z)$ at level $s$. Then, for $\delta \in (0,1)$, we define
\begin{equation}
\label{eq:lcq_func}
  \ubar \varepsilon_Z(\delta) \coloneqq \operatorname{Quant}^{\leftarrow}_Z(1-\delta) = \inf \bigl \{t \geq 0 : C_Z(t) \geq 1-\delta \bigr\},
\end{equation}
which captures the smallest threshold $t$ such that PML is bounded by $t$ with probability at least $1 - \delta$. Observe that the mapping $s \mapsto \operatorname{Quant}^{\leftarrow}_Z(s)$ is non-decreasing and left-continuous. Consequently, $\delta \mapsto \ubar \varepsilon_Z(\delta)$ is non-increasing and right-continuous. The function $\ubar \varepsilon_Z(\delta)$ also admits the equivalent, and somewhat more explicit formulation
\begin{equation}
\label{eq:var_lcq_func}
    \ubar \varepsilon_Z(\delta) = \min_{\substack{\mathcal{A} \subset \cZ \\ P_Z(\mathcal{A}) \geq 1 - \delta}} \; \max_{z \in \mathcal{A}} \, \ell(z), \quad \delta \in (0,1).
\end{equation}
That is, $\ubar \varepsilon_Z(\delta)$ tells us how small we can make the worst-case leakage, up to ignoring a set of probability $\delta$. The equivalence between~\eqref{eq:lcq_func} and \eqref{eq:var_lcq_func} is proved in Lemma~\ref{lemma:lcq_formulations} in Appendix~\ref{ssec:quantile_equivalence_proof} for completeness. Note that $P_{XY}$ satisfies $(\varepsilon, \delta)$-probabilistic PML if and only if $\ubar \varepsilon_Y(\delta) \leq \varepsilon$. 

We now define the \emph{PML envelope} of $P_{XY}$, denoted by $\varepsilon_c$, as the supremum of $\ubar \varepsilon_Z$ over all post-processings of $Y$, i.e., 
\begin{equation*}
    \varepsilon_c(\delta) \coloneqq \sup_{Z : X - Y - Z} \ubar \varepsilon_Z(\delta), \quad \delta \in (0,1).
\end{equation*}
In words, $\varepsilon_c(\delta)$ captures the tightest privacy guarantee that survives arbitrary downstream transformations, up to a failure probability of $\delta$. 

By definition, the map $\delta \mapsto \varepsilon_c(\delta)$ is non-increasing. It is also easy to show that $\varepsilon_c(\delta)$ is lower semi-continuous: $\ubar \varepsilon_Z$ is lower semi-continuous because it is non-increasing and right-continuous, and the supremum of an arbitrary family of lower semi-continuous functions is itself lower semi-continuous. We therefore obtain the following result.
\begin{lemma}
\label{lemma:properties_envelope}
The PML envelope $\varepsilon_c$ is non-increasing and lower semi-continuous on $(0,1)$.       
\end{lemma}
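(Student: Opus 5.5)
The plan is to prove both properties pointwise in $Z$ and then pass them to the supremum. Fix any $Z$ with $X - Y - Z$. By \eqref{eq:lcq_func}, $\ubar\varepsilon_Z(\delta) = \operatorname{Quant}^{\leftarrow}_Z(1-\delta)$. Since $C_Z$ is a CDF, the sets $\{t \geq 0 : C_Z(t) \geq s\}$ shrink as $s$ grows. Hence $s\mapsto \operatorname{Quant}^{\leftarrow}_Z(s)$ is non-decreasing, and $\delta \mapsto \ubar\varepsilon_Z(\delta)$ is non-increasing.

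For monotonicity of the envelope, take $\delta_1 \le \delta_2$. For every $Z$,
\begin{equation*}
\ubar\varepsilon_Z(\delta_2) \le \ubar\varepsilon_Z(\delta_1) \le \varepsilon_c(\delta_1).
\end{equation*}
Taking the supremum over $Z$ gives $\varepsilon_c(\delta_2)\le\varepsilon_c(\delta_1)$. Finiteness is not an issue: by \eqref{eq:pml_bounds}, every $\ubar\varepsilon_Z$ is bounded by $\log(1/\min_x P_X(x))$, and this bound is uniform in $Z$ since $P_X$ is fixed.

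For lower semi-continuity, the first step is right-continuity of $\ubar\varepsilon_Z$, i.e.\ left-continuity of the quantile function. Let $s_n \uparrow s$ and set $q=\lim_n \operatorname{Quant}^{\leftarrow}_Z(s_n)$; this limit exists and satisfies $q \le \operatorname{Quant}^{\leftarrow}_Z(s)$ by monotonicity. Right-continuity of $C_Z$ gives $C_Z(\operatorname{Quant}^{\leftarrow}_Z(s_n)) \ge s_n$, since the infimum is attained. Because $C_Z$ is non-decreasing, $C_Z(q) \ge C_Z(\operatorname{Quant}^{\leftarrow}_Z(s_n)) \ge s_n$ for all $n$, so $C_Z(q)\ge s$. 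Therefore $\operatorname{Quant}^{\leftarrow}_Z(s)\le q$, and equality holds. Alternatively, since $\ell(Z)$ takes finitely many values, $\ubar\varepsilon_Z$ is a right-continuous step function, which can be read off from \eqref{eq:var_lcq_func}.

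Next, a non-increasing right-continuous function $f$ is lower semi-continuous. At $\delta_0$, approaching from the right, right-continuity gives $\liminf f = f(\delta_0)$. Approaching from the left, monotonicity gives $f(\delta) \ge f(\delta_0)$. In both cases $\liminf_{\delta\to\delta_0} f(\delta)\ge f(\delta_0)$.

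Finally, a pointwise supremum of lower semi-continuous functions is lower semi-continuous, because
\begin{equation*}
\{\delta : \varepsilon_c(\delta) > a\} = \bigcup_Z \{\delta : \ubar\varepsilon_Z(\delta) > a\}
\end{equation*}
is a union of open sets. Equivalently, for each $Z$ we have $\liminf_{\delta\to\delta_0} \varepsilon_c(\delta) \ge \liminf_{\delta\to\delta_0} \ubar\varepsilon_Z(\delta) \ge \ubar\varepsilon_Z(\delta_0)$, and taking the supremum over $Z$ finishes the argument.

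The only step needing real care is the left-continuity of the quantile, specifically that the infimum defining it is attained; everything else is formal. I do not expect a genuine obstacle. I also do not expect right-continuity of $\varepsilon_c$ itself, since suprema only preserve lower semi-continuity, which is why the lemma claims only that.
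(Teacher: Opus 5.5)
Your proof is correct and follows the paper's own argument. The paper also argues that each $\ubar\varepsilon_Z$ is non-increasing and right-continuous, hence lower semi-continuous, and that these properties pass to the supremum over $Z$; you simply supply the left-continuity of the quantile function, which the paper takes for granted. One correction to your closing aside: for a non-increasing $f$, lower semi-continuity is equivalent to right-continuity, because for $\delta>\delta_0$ you have $f(\delta)\le f(\delta_0)$ while lower semi-continuity gives $\liminf_{\delta\downarrow\delta_0} f(\delta)\ge f(\delta_0)$, so the lemma does in fact imply that $\varepsilon_c$ is right-continuous.
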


Before we characterize and compute the PML envelope, let us first express it in an alternative form. Given a random variable $Z$ and $s \in (0,1)$, let
\begin{equation*}
    \operatorname{Quant}^{\rightarrow}_Z(s) \coloneqq \sup \bigl \{t \geq 0 : C_Z(t) \leq s \bigr\},
    \end{equation*}
be the \emph{right-continuous quantile function}\footnote{See \cite{embrechts2013note,dufour1995distribution} for a review of generalized inverses and quantile functions.} of $\ell(Z)$ at level $s$. Then, for $\delta \in (0,1)$ define     
\begin{align*}
   \bar \varepsilon_Z(\delta) &\coloneqq \operatorname{Quant}^{\rightarrow}_Z(1-\delta) = \sup \{t \geq 0 : C_Z(t) \leq 1 - \delta \} \\[0.3em]
   &= \max_{\substack{\cA \subset \cZ \\ P_Z(\cA) \geq \delta}} \; \min_{z \in \cA} \; \ell(z).
\end{align*}
Observe that the mapping $\delta \mapsto \bar \varepsilon_Z(\delta)$ is non-increasing and left-continuous.

Heuristically, the difference between $\bar \varepsilon_Z(\delta)$ and $\ubar \varepsilon_Z(\delta)$ can be understood as follows: 
$\ubar \varepsilon_Z(\delta)$ is the smallest upper bound on the worst-case PML over all the \say{good} sets of outputs (i.e., sets with probability at least $1-\delta$). In contrast, $\bar \varepsilon_Z(\delta)$ is the largest lower bound on the worst-case PML over all \say{bad} sets of outputs (i.e., sets with probability at least $\delta$). If $C_Z$ is strictly increasing, then $\ubar \varepsilon_Z(\delta) = \bar \varepsilon_Z(\delta)$ for all $\delta \in (0,1)$, since in that case the inverse of $C_Z$ is well defined and coincides with both the left-continuous and right-continuous quantile functions. More generally, we have $\bar \varepsilon_Z(\delta) \geq \ubar \varepsilon_Z(\delta)$ for all $\delta \in (0,1)$ and the inequality may be strict, particularly when $Z$ is a discrete random variable. 

The difference between $\ubar \varepsilon_Z$ and $\bar \varepsilon_Z$ is further illustrated in the following example.
\begin{example}
Recall $X$ and $Y$ from Example~\ref{ex:prob_pml_not_post_proc}, and fix $\delta = 0.1$. The leakage random variable $\ell(Y)$ takes on two distinct values: a low leakage value of $\log(10/9)$ with probability $0.9$, and a high leakage value of $\log(4)$ with probability $0.1$. For this distribution, we have $\ubar \varepsilon_Y(\delta) = \log (10/9)$, while $\bar \varepsilon_Y(\delta) = \log 4$. Figure~\ref{fig:PML_quantiles_example} shows the distribution function $C_Y$ of $\ell(Y)$, together with two vertical lines marking $\ubar \varepsilon_Y(\delta)$ and $\bar \varepsilon_Y(\delta)$. 

\begin{figure}
    \centering
    \includegraphics[scale=0.8]{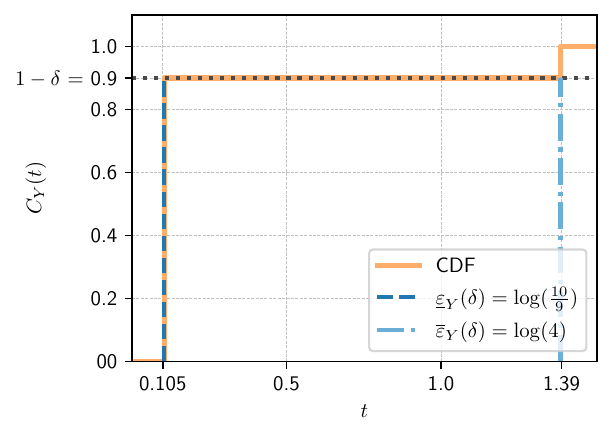}
    \caption{An example of the CDF $C_Y(t)$ together with $\ubar \varepsilon_Y(\delta)$ and $\bar \varepsilon_Y(\delta)$ at $\delta = 0.1$. Since $C_Y$ is not strictly increasing, there is a gap between $\ubar \varepsilon_Y(\delta)$ and $\bar \varepsilon_Y(\delta)$.}
    \label{fig:PML_quantiles_example}
\end{figure}
\end{example}

Below, we show that maximizing over all post-processings of $Y$ eliminates the gap between $\bar \varepsilon_Z(\delta)$ and $\ubar \varepsilon_Z(\delta)$. Thus, both quantities can be used to define the PML envelope.  
\begin{theorem}
\label{thm:comp_eps}
Suppose $X$ and $Y$ are finite random variables. For all $\delta \in (0,1)$, we have
\begin{equation*}
    \varepsilon_c(\delta)
    = \sup_{Z : X - Y - Z} \; \ubar \varepsilon_Z(\delta)
    = \sup_{Z : X - Y - Z} \; \bar \varepsilon_Z(\delta).
\end{equation*}
\end{theorem}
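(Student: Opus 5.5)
We prove the chain $\inf\{\varepsilon>0:\delta_c(\varepsilon)\le\delta\} = \sup_Z \ubar\varepsilon_Z(\delta) = \sup_Z \bar\varepsilon_Z(\delta)$. Here $\varepsilon_c(\delta)$ is understood in its operational form from the introduction, via the closed failure probability $\delta_c$ of \eqref{eq:closed_delta}. The argument has three steps: a quantile duality giving the first equality, a trivial inequality $\sup_Z \ubar\varepsilon_Z \le \sup_Z \bar\varepsilon_Z$, and a post-processing construction giving the reverse inequality.

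\emph{Step 1 (first equality).} For a fixed $Z$ and $\varepsilon\ge 0$, the right-continuity of the CDF $C_Z$ gives the standard quantile duality
\begin{equation*}
\ubar\varepsilon_Z(\delta)\le \varepsilon \iff C_Z(\varepsilon)\ge 1-\delta \iff \bP\{\ell(Z)>\varepsilon\}\le\delta .
\end{equation*}
The infimum in \eqref{eq:lcq_func} is attained because $C_Z$ is right-continuous. Taking the supremum over all $Z$ with $X-Y-Z$ yields
\begin{equation*}
\sup_Z \ubar\varepsilon_Z(\delta)\le\varepsilon \iff \delta_c(\varepsilon)\le\delta .
\end{equation*}
Hence $\{\varepsilon>0:\delta_c(\varepsilon)\le\delta\} = [\sup_Z\ubar\varepsilon_Z(\delta),\infty)\cap(0,\infty)$, and its infimum is $\sup_Z \ubar\varepsilon_Z(\delta)$. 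This uses $\ell\ge 0$ from \eqref{eq:pml_bounds} and the boundedness of $\ell$, so the supremum is finite. It also covers the degenerate case where the supremum is $0$.

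\emph{Step 2 (easy inequality).} Since $\ubar\varepsilon_Z(\delta)\le\bar\varepsilon_Z(\delta)$ for every $Z$, we get $\sup_Z\ubar\varepsilon_Z(\delta)\le\sup_Z\bar\varepsilon_Z(\delta)$.

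\emph{Step 3 (reverse inequality; the main obstacle).} Fix $Z$ with $X-Y-Z$, let $t=\bar\varepsilon_Z(\delta)$, and set $\cA=\{z:\ell(z)\ge t\}$. By the max--min formula for $\bar\varepsilon_Z$, we have $P_Z(\cA)\ge\delta$. The goal is, for each $\eta>0$, a post-processing $Z'$ of $Z$ (hence of $Y$) with $\bP\{\ell(Z')\ge t-\eta\}>\delta$. This suffices: then $\bP\{\ell(Z')>s\}>\delta$ for every $s<t-\eta$, so $\ubar\varepsilon_{Z'}(\delta)\ge t-\eta$, and letting $\eta\to 0$ gives $\sup_{Z'}\ubar\varepsilon_{Z'}(\delta)\ge\bar\varepsilon_Z(\delta)$.

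If $P_Z(\cA)>\delta$, we take $Z'=Z$. Otherwise $P_Z(\cA)=\delta<1$, so some $z_0\notin\cA$ has $P_Z(z_0)>0$. Pick $z_1\in\cA$ with $P_Z(z_1)>0$. Define $Z'$ as follows: when $Z=z_0$, output $z_1$ with probability $\lambda$ and $z_0$ otherwise; leave all other outcomes unchanged. Under this map:
\begin{itemize}
\item the leakage of each other outcome is unchanged;
\item the leakage of $z_0$ is unchanged, since its joint probabilities with $X$ and its marginal are both scaled by $1-\lambda$;
\item the new atom $z_1$ has $P_{XZ'}(x,z_1)=P_{XZ}(x,z_1)+\lambda P_{XZ}(x,z_0)$ and $P_{Z'}(z_1)=P_Z(z_1)+\lambda P_Z(z_0)$. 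Its leakage is therefore continuous in $\lambda$ and equals $\ell(z_1)\ge t$ at $\lambda=0$.
\end{itemize}
Choosing $\lambda>0$ small enough that $\ell_{P_{XZ'}}(z_1)\ge t-\eta$, we obtain
\begin{equation*}
\bP\{\ell(Z')\ge t-\eta\}\ge P_Z(\cA)+\lambda P_Z(z_0)>\delta .
\end{equation*}
Finally, $Z'$ is a randomized function of $Z$, so $X-Y-Z'$ holds by composing the two channels. Combining Steps 1--3 proves the theorem.
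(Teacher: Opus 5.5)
Your proof is correct and uses essentially the same construction as the paper's proof in the appendix. The paper also diverts a small fraction $\eta$ of a low-leakage outcome $y_2$ into a high-leakage outcome $y_1$ (its merged symbol $\bot$), so that the mass of high-leakage outcomes strictly exceeds $\delta$ while their leakage drops only slightly, and combines this with the standard inequality $\ubar\varepsilon_Z \le \bar\varepsilon_Z$. Your version is somewhat cleaner: continuity in $\lambda$ replaces the paper's first-order expansion and its (unneeded) positivity argument, you apply the construction to an arbitrary $Z$ rather than only $Y$, you make the dependence of $Z'$ on $\eta$ explicit, and your Step~1 reconciles the introduction's operational definition via $\delta_c$, whereas the paper simply defines $\varepsilon_c(\delta) := \sup_Z \ubar\varepsilon_Z(\delta)$.
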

\begin{IEEEproof}
    See Appendix~\ref{sec:thm_comp_eps_proof}.
\end{IEEEproof}

Importantly, Theorem~\ref{thm:comp_eps} explains \emph{why} probabilistic PML is not closed under post-processing. It states that the idea of simply ignoring a small \say{bad} set does not work because, through post-processing, an adversary can increase the information leakage all the way up to the smallest value in the bad set. To illustrate this, let us revisit Example~\ref{ex:prob_pml_not_post_proc}. For $\delta = 0.1$, the good set consists of two outcomes with PML $\log (10/9)$ and the bad set consists of two outcomes with PML $\log(4)$. By Theorem~\ref{thm:comp_eps}, the adversary can increase the leakage from $\log (10/9)$ all the way up to $\log(4)$. Note that by \eqref{eq:pml_bounds}, $\log(4)$ is the largest amount of information \emph{any} mechanism can leak about a uniformly distributed quaternary secret $X$. As $\delta \mapsto \varepsilon_c(\delta)$ is non-increasing, it follows that $\varepsilon_c(\delta) = \log(4)$ for all $\delta \in (0, 0.1]$.

% Theorem~\ref{thm:comp_eps} can also be used to prove that the map $\delta \mapsto \varepsilon_c(\delta)$ is continuous. 
% \begin{cor}
% \label{cor:cont}
% Suppose $X$ and $Y$ are finite random variables. The function $\varepsilon_c$ is continuous on $(0,1)$. 
% \end{cor}

In the following sections, we characterize and bound $\varepsilon_c$. Our analysis frequently relies on Theorem~\ref{thm:comp_eps} and on studying $\bar \varepsilon_Z$ for different choices of $Z$ instead of $\ubar \varepsilon_Z$. We refer to $\bar \varepsilon_Z(\delta)$ as the \emph{PML $\delta$-quantile} of $Z$.

%%%%%%%%%%%%%%%%%%%%%%%%%%%%%%%%%%%%%
%%%%%%%%%%%%%%%%%%%%%%%%%%%%%%%%%%%%%
\subsection{Upper Bounding the PML envelope} 
Computing the PML envelope is, in general, a challenging task, since it involves a maximization over all possible post-processings of $Y$. This motivates us to derive bounds on the envelope. Below, we present a general upper bound.

\begin{theorem}
\label{thm:upper_bound_envelope}
Suppose the joint distribution $P_{XY}$ satisfies $\varepsilon$-PML with $\varepsilon >0$. Then, for all $\delta \in (0,1)$ we have
\begin{equation}
\label{eq:envelope_ml_bound}
    \varepsilon_c(\delta) \leq \min \left \{\cL(X \to Y) + \log \frac{1}{\delta}, \; \varepsilon \right\},
\end{equation}   
where 
\begin{equation*}
    \cL(X \to Y) =  \log \, \bE \big[e^{\ell(Y)}\big] = \log \, \sum_{y \in \cY} \max_{x \in \cX} P_{Y \mid X=x}(y),
\end{equation*}
denotes (the logarithm of) multiplicative Bayes capacity~\cite{alvim2014additive}, also known as maximal leakage~\cite{issaOperationalApproachInformation2020}. 
\end{theorem}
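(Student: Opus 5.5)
We bound the two terms in the minimum separately; each bound is proved for an arbitrary post-processing $Z$ with $X-Y-Z$, and then we take the supremum over $Z$.

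\emph{The bound $\varepsilon_c(\delta)\le\varepsilon$.} We first show that $\varepsilon$-PML is preserved under post-processing. For any $z$ and $x$,
\begin{equation*}
P_{Z\mid X=x}(z)=\sum_y P_{Z\mid Y=y}(z)\,P_{Y\mid X=x}(y)\le e^{\varepsilon}\sum_y P_{Z\mid Y=y}(z)\,P_Y(y)=e^{\varepsilon}P_Z(z),
\end{equation*}
using $P_{Y\mid X=x}(y)\le e^{\varepsilon}P_Y(y)$, which holds by the $\varepsilon$-PML assumption. Hence $\ell(z)\le\varepsilon$ for every $z$ with $P_Z(z)>0$. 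Consequently $C_Z(\varepsilon)=1\ge1-\delta$, so $\ubar\varepsilon_Z(\delta)\le\varepsilon$, and taking the supremum over $Z$ gives the claim.

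\emph{The bound $\varepsilon_c(\delta)\le\cL(X\to Y)+\log\frac1\delta$.} Here we combine a Markov-type inequality with the data-processing inequality for maximal leakage. Set $t=\cL(X\to Z)+\log\frac1\delta$. Markov's inequality applied to $e^{\ell(Z)}$ gives
\begin{equation*}
\bP\{\ell(Z)>t\}\le \frac{\bE[e^{\ell(Z)}]}{e^{t}}=\delta,
\end{equation*}
so $C_Z(t)\ge1-\delta$ and hence $\ubar\varepsilon_Z(\delta)\le t$. (We work with $\ubar\varepsilon_Z$ here, which is the definition of $\varepsilon_c$; Theorem~\ref{thm:comp_eps} is not needed.)

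It remains to show $\cL(X\to Z)\le\cL(X\to Y)$:
\begin{equation*}
\sum_z\max_x\sum_y P_{Z\mid Y=y}(z)P_{Y\mid X=x}(y)\le\sum_z\sum_y P_{Z\mid Y=y}(z)\max_x P_{Y\mid X=x}(y)=\sum_y\max_xP_{Y\mid X=x}(y),
\end{equation*}
where the inequality moves the maximum inside the sum and the equality uses $\sum_z P_{Z\mid Y=y}(z)=1$. Together these give $\ubar\varepsilon_Z(\delta)\le\cL(X\to Y)+\log\frac1\delta$ for every $Z$, and taking the supremum over $Z$ yields the bound.

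Combining the two bounds gives the minimum in~\eqref{eq:envelope_ml_bound}. The only real content is the data-processing step for $\cL$, and it is elementary. One minor point to handle: outputs $z$ with $P_Z(z)=0$ should be excluded from the sums. This is harmless, since they carry no probability under $P_Z$ and contribute nothing to $\sum_z\max_xP_{Z\mid X=x}(z)$.
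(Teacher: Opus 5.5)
Your proof is correct and follows the paper's route. For the first term you use Markov's inequality on $e^{\ell(Z)}$ together with the data-processing inequality for maximal leakage, and for the second you use the fact that post-processing cannot increase the worst-case PML. The differences are minor streamlining: you prove both cited inequalities inline and bound $\ubar\varepsilon_Z(\delta)$ directly, so you avoid Theorem~\ref{thm:comp_eps} and the paper's detour through $\sup_{\delta'}\varepsilon_c(\delta')$ and small-$\delta$ quantiles.
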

\begin{IEEEproof}
To prove the first term of the bound, fix some $Z$ satisfying the Markov chain $X - Y - Z$, and $t \geq 0$. We write
\begin{subequations}
\begin{align}
    \bP \{\ell(Z) > t\} &= \bP \Big\{e^{\ell(Z)} > e^{t} \Big\} \nonumber\\
    &\leq \bE \big[e^{\ell(Z)}\big] e^{-t} \label{subeq:markov}\\
    &\leq \bE \big[e^{\ell(Y)}\big]  e^{-t}, \label{subeq:data_proc_max_leak}
\end{align}
\end{subequations}
where \eqref{subeq:markov} is due to Markov's inequality and \eqref{subeq:data_proc_max_leak} is due to the data-processing inequality for maximal leakage~\cite[Lemma 1]{issaOperationalApproachInformation2020}. 
Thus, 
\begin{equation*}
    \{t \geq 0 : \bP \{\ell(Z) > t\} \geq \delta \} \subseteq \left\{t \geq 0 : \bE \big[e^{\ell(Y)}\big] e^{-t} \geq \delta \right\}, 
\end{equation*}
which implies that 
\begin{align*}
    \bar \varepsilon_Z(\delta) &= \sup \; \{t \geq 0 : \bP \{\ell(Z) > t\} \geq \delta \}\\
    &\leq \sup \; \left\{t \geq 0 :  \bE \big[e^{\ell(Y)}\big] e^{-t} \geq \delta \right\}\\
    &= \cL(X \to Y)+ \log \Big(\frac{1}{\delta} \Big). 
\end{align*}

To prove the second term, we use the following four facts: 
\begin{enumerate}
    \item $P_{XY}$ satisfies $\varepsilon$-PML if 
    \begin{equation*}
        \max_{y \in \cY} \ell(X \to y) \leq \varepsilon, 
    \end{equation*}

    \item when $\cY$ is a finite set, there exists $\delta_0 >0$ such that for all $0< \delta \leq \delta_0$ we have 
    \begin{equation*}
        \ubar \varepsilon_Y (\delta) = \bar \varepsilon_Y(\delta) = \max_{y \in \cY} \ell(X \to y), 
    \end{equation*}

    \item the post-processing inequality for PML~\cite[Lemma 1]{saeidian2023pointwise_it} states that if the Markov chain $X - Y - Z$ holds, then 
    \begin{equation*}
        \max_{z \in \cZ} \ell_{P_{XZ}}(X \to z) \leq \max_{y \in \cY} \ell_{P_{XY}}(X \to y), 
    \end{equation*}
    and 
    
    \item $\bar \varepsilon_Z(\delta)$, $\ubar \varepsilon_Z(\delta)$, and $\varepsilon_c(\delta)$ are all non-increasing in $\delta$.
\end{enumerate}

Thus, for all $\delta \in (0,1)$, it holds that 
\begin{align*}
    &\varepsilon_c(\delta) \leq \sup_{\delta' \in (0,1)} \varepsilon_c(\delta') = \sup_{Z : X-Y-Z} \; \sup_{\delta' \in (0,1)} \, \bar \varepsilon_Z(\delta')\\
    &= \sup_{Z : X-Y-Z} \; \max_{z \in \cZ} \ell_{P_{XZ}}(X \to z) \leq \max_{y \in \cY} \ell_{P_{XY}}(X \to y)\\
    &\leq \varepsilon. 
\end{align*}
\end{IEEEproof}

We make a few remarks on Theorem~\ref{thm:upper_bound_envelope}. First, when $X$ is finite, \eqref{eq:pml_bounds} implies that PML is uniformly bounded for all outcomes $y \in \cY$. Therefore, regardless of the mechanism $P_{Y\mid X}$, the joint distribution $P_{XY}$ satisfies $\varepsilon$-PML for some finite $\varepsilon$, yielding a valid bound on the envelope. Second, the maximal leakage bound is useful for characterizing the PML envelope under adaptive composition. In particular, suppose we run two mechanisms on $X$, producing outputs $Y_1$ and $Y_2$. Then, using \cite[Corollary 2]{issaOperationalApproachInformation2020} we have
\begin{align*}
    \varepsilon_c(\delta) &\leq \cL(X \to Y_1,Y_2) + \log\frac{1}{\delta}\\
    &\leq \cL(X \to Y_1) + \cL(X \to Y_2 \mid Y_1) + \log\frac{1}{\delta},
\end{align*}
where 
\begin{multline*}
     \cL(X \to Y_2 \mid Y_1) = \\ 
     \max_{\substack{y_1 : \\ P_{Y_1}(y_1) > 0}} \log \, \sum_{y_2} \max_{\substack{x :\\ P_{X \mid Y_1=y_1} (x) >0}} P_{Y_{2} \mid X=x, Y_1 = y_1}(y_2), 
\end{multline*}
is the conditional form of maximal leakage \cite[Thm. 6]{issaOperationalApproachInformation2020}. When the composition is non-adaptive, the mechanism releasing $Y_2$ does not depend on $Y_1$, so $P_{Y_2 \mid X, Y_1} = P_{Y_2 \mid X}$. In that case, $\cL(X \to Y_2 \mid Y_1) \leq \cL(X \to Y_2)$ and the composition bound simplifies to 
\begin{align*}
    \varepsilon_c(\delta) \leq \cL(X \to Y_1) + \cL(X \to Y_2) + \log\frac{1}{\delta}.
\end{align*}
By induction, if we adaptively run $n$ mechanisms on $X$ and obtain outputs $Y^n=(Y_1,\dots,Y_n)$, then we have
\begin{align*}
    \varepsilon_c(\delta) &\leq \cL(X \to Y^n) + \log\frac{1}{\delta}\\
    &\leq \sum_{i=1}^n \cL(X \to Y_i \mid Y^{i-1}) + \log\frac{1}{\delta}.
\end{align*}
Third, both terms of the upper bound are themselves robust under post-processing.

\subsection{Lower Bounding the PML envelope} 
To obtain lower bounds on $\varepsilon_c$, we may restrict attention to specific classes of post-processings and compute the PML $\delta$-quantile of the resulting outputs. This can be viewed as restricting the computational power of the adversary. In general, the choice of post-processings used to derive lower bounds is mechanism dependent (see Section~\ref{ssec:rr} for an illustration). Nevertheless, it is instructive to consider two simple and broadly applicable instances: (i) taking $Z = Y$, i.e., no post-processing, which yields $\bar{\varepsilon}_Y$; and (ii) restricting attention to binary post-processings of $Y$.

Let
\begin{equation}
\label{eq:binary_envelope}
    \varepsilon_b(\delta) \coloneqq \sup_{Z : X - Y - Z,\; \mathcal{Z} = \{0,1\}} \bar \varepsilon_Z(\delta), \quad \delta \in (0,1),
\end{equation}
denote the \emph{binary} envelope of $P_{XY}$. Then, for all $\delta \in (0,1)$, we have the lower bound
\begin{equation}
\label{eq:envelope_lower_bound}
    \varepsilon_c(\delta) \geq
    \max \big\{\bar \varepsilon_Y(\delta), \varepsilon_b(\delta) \big\}.
\end{equation}
One may obtain sharper lower bounds by extending the bound to ternary, quaternary, and other higher-order $k$-ary post-processings. As we will see below, the advantage of $\varepsilon_b$ is that it is simple to compute. 

The quantity $\varepsilon_b$ previously appeared in the work of \textcite{saeidian2023pointwise_it} (in a slightly different form) as a stand-alone privacy definition. In contrast, here, we use $\varepsilon_b$ only as a lower bound on the PML envelope. Algorithm~\ref{alg:epsb} provides a procedure for computing $\varepsilon_b$ based on the proof of \cite[Thm. 3]{saeidian2023pointwise_it}. The main idea underlying both this proof and Algorithm~\ref{alg:epsb} is to generalize the notion of PML from individual outcomes to \emph{events}. In particular, given an event $\mathcal E \subseteq \mathcal Y$ with $P_Y(\mathcal E)>0$, and $Z=\ind_{\mathcal E}(Y)$, we may define event-wise leakage as
\begin{equation*}
    \ell_{P_{XY}}(X\to \mathcal E)\coloneqq \ell_{P_{XZ}}(X\to 1).
\end{equation*}
Technically, this is not a new concept since event-wise leakage is simply the PML of the affirmative outcome of an indicator function. Nevertheless, it provides a convenient shorthand for reasoning about post-processing. Note that extending PML to events is natural in this context, since any binary post-processing corresponds to selecting an event in $\mathcal Y$ and revealing whether or not the outcome $Y$ lies in that event.

Heuristically, Algorithm~\ref{alg:epsb} works as follows: For each $x\in\mathcal X$, we compute the largest value of
\begin{equation*}
    \frac{P_{Y\mid X=x}(\mathcal E)}{P_Y(\mathcal E)},
\end{equation*}
over events $\mathcal E\subseteq\mathcal Y$ with $P_Y(\mathcal E)=\delta$ (with possible randomization at the boundary).\footnote{Here, we use the term \say{event} in a generalized sense that allows randomization. In particular, each outcome $y \in \cY$ may be in the set $\cE$ with a certain probability. See Appendix~\ref{sec:events} for details.}
The quantity $\varepsilon_b$ is then obtained by taking the logarithm of the maximum of this value over all $x\in\mathcal X$. Further discussion about the information leakage of events and their connection to post-processing is provided in Appendix~\ref{sec:events}. 

Finally, we emphasize that both terms $\bar \varepsilon_Y$ and $\varepsilon_b$ must be included in the lower bound \eqref{eq:envelope_lower_bound}. Depending on the setup, either term may yield the tighter bound.

\begin{algorithm}[t]
\caption{Computing $\varepsilon_b(\delta)$ for a joint distribution $P_{XY}$}
\label{alg:epsb}
\begin{algorithmic}[1]
\Require $P_{Y \mid X}$, $P_X$ and $\delta\in(0,1)$
\Ensure $\varepsilon_b(\delta)$
\State Compute the marginal $P_Y = P_{Y \mid X} \circ P_X$
\State Initialize $M\gets 0$
\ForAll{$x\in\cX$}
    \ForAll{$y\in\cY$}
    \State $s_x(y)\gets \frac{P_{Y \mid X=x}(y)}{P_Y(y)}$
    \EndFor
    \State Sort $\cY$ as $(y_1,\dots,y_{\abs{\cY}})$ so that 
    \begin{equation*}
        s_x(y_1)\ge s_x(y_2)\ge\cdots\ge s_x(y_{\abs{\cY}})
    \end{equation*}
    \State Find the smallest index $k^\star$ such that $\sum_{j=1}^{k^\star} P_Y(y_j)\ge \delta$
    \State $p\gets \sum_{j=1}^{k^\star-1} P_Y(y_j)$
    \Statex \vspace{-0.6em}
    \State $\zeta \gets \frac{\delta-p}{P_Y(y_{k^\star})}$
    \Statex \vspace{-0.5em}
    \State $v \gets \frac{1}{\delta}\left(\sum_{j=1}
    ^{k^\star-1} P_{Y \mid X=x}(y_j) +  \zeta P_{Y \mid X=x}(y_{k^\star}) \right)$
    \Statex \vspace{-0.6em}
    \State $M\gets \max (M,\,v)$
\EndFor
\State \Return $\varepsilon_b \gets \log M$
\end{algorithmic}
\end{algorithm}

\section{Applications}
\label{sec:examples}
We now calculate and bound the PML envelope in two canonical settings. These are: the \emph{PML–extremal mechanisms} in the high-privacy regime~\cite{grosseExtremalMechanismsPointwise2024} and the \emph{randomized response} mechanism~\cite{warnerRandomizedResponseSurvey1965,kairouzExtremalMechanismsLocal2016}.

\subsection{PML-extremal Mechanisms}
In \cite{grosseExtremalMechanismsPointwise2024}, Grosse \emph{et al.} studied the design of optimal privacy mechanisms under the $\varepsilon$–PML constraint for a broad class of convex utility functions. The resulting mechanisms were termed \emph{PML–extremal mechanisms}. Fix a prior distribution $P_X$ on $\mathcal{X} = [k]$ with $k \geq 2$. \textcite{grosseExtremalMechanismsPointwise2024} showed that in the \emph{high-privacy regime}, corresponding to
\begin{equation*}
    0 < \varepsilon < \log \left(\frac{1}{1 - \min_{x \in \mathcal{X}} P_X(x)}\right),
\end{equation*}
the optimal mechanism has the form
\begin{equation*}
     P^*_{Y \mid X=i}(j)
     = \begin{cases}
        1 - e^{\varepsilon}\bigl(1 - P_X(i)\bigr), & \text{if } i=j, \\[0.25em]
        e^{\varepsilon} P_X(j), & \text{if } i \neq j,
    \end{cases}
\end{equation*}
where $\mathcal{Y} = \mathcal{X}$, and $i,j \in [k]$. Note that the outcomes of this mechanism have the PML $\ell(X \to j) = \varepsilon$ for all $j \in [k]$. 

\begin{theorem}
\label{thm:pml_ext}
Let $P_X$ be a distribution on $\mathcal{X} = [k]$, and fix $0 < \varepsilon < - \log \left(1 - \min_{x \in \mathcal{X}} P_X(x)\right)$. Let $P^*_{Y \mid X}$ denote the PML-extremal mechanism. Then, for all $\delta \in (0,1)$, the PML envelope is $\varepsilon_c(\delta) = \varepsilon$.
\end{theorem}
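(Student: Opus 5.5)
The plan is to sandwich $\varepsilon_c(\delta)$ between two bounds that both equal $\varepsilon$. For the upper bound I would invoke Theorem~\ref{thm:upper_bound_envelope} directly. The PML-extremal mechanism satisfies $\ell(X \to j) = \varepsilon$ for every $j \in [k]$, so $P^*_{XY}$ satisfies $\varepsilon$-PML. The second term of \eqref{eq:envelope_ml_bound} then gives $\varepsilon_c(\delta) \le \varepsilon$ for all $\delta \in (0,1)$.

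For the lower bound I would use \eqref{eq:envelope_lower_bound} with the trivial post-processing $Z = Y$, giving $\varepsilon_c(\delta) \ge \bar\varepsilon_Y(\delta)$. Since every outcome has leakage exactly $\varepsilon$, the random variable $\ell(Y)$ is the constant $\varepsilon$. Its CDF is therefore $C_Y(t) = 0$ for $t<\varepsilon$ and $C_Y(t) = 1$ for $t \ge \varepsilon$. Using the set formulation, take $\cA = \cY$: then $P_Y(\cA) = 1 \ge \delta$ and $\min_{z\in\cA}\ell(z) = \varepsilon$, so $\bar\varepsilon_Y(\delta) \ge \varepsilon$. (Equivalently, the CDF form gives $\ubar\varepsilon_Y(\delta) = \inf\{t : C_Y(t)\ge 1-\delta\} = \varepsilon$, and $\bar\varepsilon_Y \ge \ubar\varepsilon_Y$.) Combining the two bounds yields $\varepsilon_c(\delta) = \varepsilon$.

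The only step needing care is verifying that each outcome indeed has leakage exactly $\varepsilon$, which the paper states but I would check. First, $P^*_Y(j) = P_X(j)\bigl(1 - e^\varepsilon(1-P_X(j))\bigr) + \sum_{i\ne j} P_X(i) e^\varepsilon P_X(j) = P_X(j)$, so the marginal equals the prior. Second, the ratios $P^*_{Y\mid X=i}(j)/P_Y(j)$ equal $e^\varepsilon$ for $i \neq j$, and $\bigl(1-e^\varepsilon(1-P_X(j))\bigr)/P_X(j) \le e^\varepsilon$ for $i = j$, the latter being equivalent to $1 \le e^\varepsilon$. Hence the maximum over $x$ is $e^\varepsilon$, attained because $k \ge 2$ guarantees some $i\neq j$. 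The high-privacy condition on $\varepsilon$ is used only to ensure the diagonal entries are nonnegative, so the mechanism is valid. I expect no real obstacle: the result follows from the constancy of $\ell(Y)$ together with the already established upper bound.
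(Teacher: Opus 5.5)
Your proposal is correct and follows the paper's proof. The upper bound comes from the $\varepsilon$-PML term of Theorem~\ref{thm:upper_bound_envelope}, and the lower bound from $\varepsilon_c(\delta)\ge\bar\varepsilon_Y(\delta)\ge\min_{j\in\mathcal{A}}\ell(X\to j)=\varepsilon$ for a set $\mathcal{A}$ with $P_Y(\mathcal{A})\ge\delta$; you take $\mathcal{A}=\mathcal{Y}$, while the paper takes an arbitrary such set. Your check that $P^*_Y=P_X$ and that every outcome has leakage exactly $\varepsilon$ is correct and fills in a fact the paper only asserts.
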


\begin{IEEEproof}
By Theorem~\ref{thm:upper_bound_envelope}, any mechanism satisfying $\varepsilon$-PML has $\varepsilon_c (\delta) \leq \varepsilon$ for all $\delta \in (0,1)$. We argue that $P^*_{Y \mid X}$ achieves this upper bound.

Let $\cA \subset \cY$ be an arbitrary set with probability $P_{Y}(\mathcal{A}) \geq \delta$. Since every $j \in {\cY}$ has leakage $\varepsilon$, we have $\min_{j \in \mathcal{A}} \ell(X \to j) = \varepsilon$. Using \eqref{eq:envelope_lower_bound} we have
\begin{align*}
\varepsilon_c(\delta) \geq \bar \varepsilon_{Y}(\delta) &= \max_{\substack{\cE \subset {\cY} \\ P_{Y}(\cE) \geq \delta}} \min_{j \in \mathcal{E}} \ell(X \to j)\\
&\geq \min_{j \in \mathcal{A}} \ell(X \to j) = \varepsilon.
\end{align*}
Combining this with the upper bound establishes the claim.
\end{IEEEproof}

The proof of Theorem~\ref{thm:pml_ext} shows that if $\ell(X \to y)=\varepsilon$ for all $y$, then the PML envelope is constant in $\delta$. We state this in the following corollary.
\begin{cor}
\label{cor:constant_pml}
Suppose $\ell(X \to y)=\varepsilon$ for all outcomes $y \in \mathcal{Y}$ with $\varepsilon>0$. Then, the PML envelope satisfies $\varepsilon_c(\delta)=\varepsilon$ for all $\delta \in (0,1)$.
\end{cor}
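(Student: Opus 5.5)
The plan is to sandwich $\varepsilon_c(\delta)$ between two bounds that both equal $\varepsilon$, following the proof of Theorem~\ref{thm:pml_ext} but without using the specific form of $P^*_{Y \mid X}$.

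For the upper bound, observe that the hypothesis $\ell(X \to y) = \varepsilon$ for every $y \in \cY$ means exactly that $P_{XY}$ satisfies $\varepsilon$-PML, since $\max_{y} \ell(X \to y) = \varepsilon$. The second term of Theorem~\ref{thm:upper_bound_envelope} then gives $\varepsilon_c(\delta) \leq \varepsilon$ for every $\delta \in (0,1)$.

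For the lower bound, I will use \eqref{eq:envelope_lower_bound} in the form $\varepsilon_c(\delta) \geq \bar\varepsilon_Y(\delta)$, together with the representation
\begin{equation*}
\bar \varepsilon_Y(\delta) = \max_{\substack{\cA \subset \cY \\ P_Y(\cA) \geq \delta}} \; \min_{y \in \cA} \, \ell(y).
\end{equation*}
The feasible family is nonempty, because $\cA = \cY$ satisfies $P_Y(\cY) = 1 \geq \delta$. For any feasible $\cA$, every element has leakage exactly $\varepsilon$, so $\min_{y \in \cA} \ell(y) = \varepsilon$. Hence $\bar\varepsilon_Y(\delta) = \varepsilon$, and therefore $\varepsilon_c(\delta) \geq \varepsilon$.

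Combining the two bounds yields $\varepsilon_c(\delta) = \varepsilon$ for all $\delta \in (0,1)$. There is no real obstacle here. The only points needing care are the following.
\begin{enumerate}
\item We interpret $\cY$ as the support of $P_Y$, so that zero-probability outcomes, for which PML is undefined, are excluded.
\item The max–min representation of $\bar\varepsilon_Y$ stated before Theorem~\ref{thm:comp_eps} is valid. Alternatively, it can be bypassed: $\ell(Y) = \varepsilon$ almost surely, so $C_Y(t) = 0$ for $t < \varepsilon$ and $C_Y(t) = 1$ for $t \geq \varepsilon$. This gives directly $\bar\varepsilon_Y(\delta) = \sup\{t : C_Y(t) \leq 1-\delta\} = \varepsilon$, since $1 - \delta < 1$.
\end{enumerate}
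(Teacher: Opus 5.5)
Your proposal is correct and follows essentially the same route as the paper, which reads this corollary off the proof of Theorem~\ref{thm:pml_ext}: it gets the upper bound $\varepsilon_c(\delta)\le\varepsilon$ from the $\varepsilon$-PML term of Theorem~\ref{thm:upper_bound_envelope}, and the matching lower bound $\varepsilon_c(\delta)\ge\bar\varepsilon_Y(\delta)=\varepsilon$ because every feasible set has minimum leakage $\varepsilon$. Your side remarks are correct refinements that the paper leaves implicit: the feasible family is nonempty, $\mathcal{Y}$ is restricted to the support, and $\bar\varepsilon_Y$ can be computed directly from the CDF.
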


%%%%%%%%%%%%%%%%%%%%%%%%%%%%%%
%%%%%%%%%%%%%%%%%%%%%%%%%%%%%%
\subsection{Randomized Response Mechanism} 
\label{ssec:rr}
Given an integer $k \geq 2$, let $\cX = \cY = [k]$. The \emph{$k$-randomized response} ($k$-RR) mechanism with parameter $\varepsilon_r > 0$ is defined as 
\begin{equation}
    P_{Y \mid X=i}(j) = 
    \begin{cases}\frac{e^{\varepsilon_r}}{e^{\varepsilon_r} + k-1},& j=i, \\[0.4em]
        \frac{1}{e^{\varepsilon_r} + k-1}, & j\neq i,
    \end{cases} \quad i,j \in [k].
\end{equation}
The $k$-RR mechanism satisfies $\varepsilon_r$-LDP. For simplicity, let 
\begin{gather*}
    \alpha \coloneqq \frac{e^{\varepsilon_r}}{e^{\varepsilon_r} + k-1},\\
    \beta \coloneqq \frac{1}{e^{\varepsilon_r} + k-1},\\
    p_i \coloneqq P_X(i), \quad i \in [k],\\
    q_j \coloneqq P_Y(j), \quad j \in [k], 
\end{gather*}
and observe that $q_i = \beta + (\alpha - \beta) p_i$ for all $i \in [k]$. Furthermore, each outcome of the mechanism has the PML 
\begin{equation*}
    \ell(X \to j) = \log \frac{\alpha}{q_j}, \quad j \in [k].
\end{equation*}
Without loss of generality, let us assume that $p_1 \leq p_2 \leq \dots \leq p_k$, which implies that $q_1 \leq q_2 \leq \dots \leq q_k$ and $\ell(X \to 1) \geq \ell(X \to 2) \geq \dots \geq \ell(X \to k)$. 

\begin{theorem}
\label{thm:k-rr}
Let $P_X$ be a distribution on $\mathcal{X} = [k]$ and let $P_{Y \mid X}$ be the $k$-RR mechanism. The PML envelope satisfies the following: 
\begin{enumerate}
    \item If $\delta \in (0,q_1]$, then $\varepsilon_c(\delta) = \log \frac{\alpha}{q_1}$.
    \item For $\delta \in (q_1, 1)$, the PML envelope admits the upper bound  
    \begin{equation*}
        \varepsilon_c(\delta) \leq\min \left \{ \log \frac{k \alpha}{\delta},  \log \frac{\alpha}{q_1} \right\}.
    \end{equation*}

    \item For $\delta \in (q_1,1)$, let $N \in \{2,\dots,k\}$ be the unique index such that $\sum_{j=1}^{N-1} q_j < \delta \leq \sum_{j=1}^N q_j$. Let $\theta := \frac{\delta - \sum_{j=1}^{N-1} q_j}{q_N} \in (0,1]$, and suppose the prior satisfies 
    \begin{equation*}
        p_{N} \leq \frac{\alpha \sum_{j=1}^{N-1} p_j + \beta}{(N-2)\alpha + \beta}.
        \end{equation*}
    Then, the PML envelope admits the lower bound 
    \begin{equation*} 
    \varepsilon_c(\delta) \geq h_\delta(\theta),  
    \end{equation*}
    where 
    \begin{equation*}
        h_\delta(\theta) = \begin{cases}
            \ell(X \to N-1) & \text{ if }\, 0 < \theta \leq \theta_1, \\[0.4em] 
            \log \; \frac{(N-1) \alpha + \theta \beta}{\delta} & \text{ if }\, \theta_1 < \theta \leq  \theta_2, \\[0.4em]
            \ell(X \to N) & \text{ if }\, \theta_2 < \theta \leq 1, 
        \end{cases}
    \end{equation*}
    and 
    \begin{gather*}
        \theta_1 = \frac{\alpha \Big( (N-2) q_{N-1} - \sum_{j=1}^{N-2} q_j \Big)}{\alpha q_N - \beta q_{N-1}},\\ 
        \theta_2 = \frac{\alpha \Big( (N-1) q_{N} - \sum_{j=1}^{N-1} q_j \Big)}{q_N (\alpha -\beta)}. 
    \end{gather*}
\end{enumerate}
\end{theorem}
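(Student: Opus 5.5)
For (i) and (ii), I would combine the general upper bound of Theorem~\ref{thm:upper_bound_envelope} with the lower bound \eqref{eq:envelope_lower_bound}. For (iii), I would build an explicit post-processing tailored to $k$-RR.

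\emph{Part (i).} Since $q_1\le q_j$ for all $j$, the largest PML value is $\max_j \ell(X\to j)=\log\frac{\alpha}{q_1}$. So $P_{XY}$ satisfies $\log\frac{\alpha}{q_1}$-PML, and the second term of Theorem~\ref{thm:upper_bound_envelope} gives $\varepsilon_c(\delta)\le \log\frac{\alpha}{q_1}$ for every $\delta$. For the reverse inequality when $\delta\le q_1$, take $\cA=\{1\}$ in the max--min formula for $\bar\varepsilon_Y(\delta)$. It is admissible because $P_Y(\cA)=q_1\ge\delta$, so $\varepsilon_c(\delta)\ge\bar\varepsilon_Y(\delta)\ge \ell(X\to 1)=\log\frac{\alpha}{q_1}$.

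\emph{Part (ii).} For $k$-RR we have $\max_x P_{Y\mid X=x}(y)=\alpha$ for every $y$, because $\alpha>\beta$. Hence $\cL(X\to Y)=\log(k\alpha)$, and the first term of Theorem~\ref{thm:upper_bound_envelope} gives $\log\frac{k\alpha}{\delta}$. Taking the minimum with the bound $\log\frac{\alpha}{q_1}$ from part (i) yields (ii).

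\emph{Part (iii).} The third branch of $h_\delta$ follows directly from $\bar\varepsilon_Y$: the set $\{1,\dots,N\}$ has probability $\sum_{j\le N}q_j\ge\delta$ and minimum leakage $\ell(X\to N)$.

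For the other two branches I would construct a post-processing $Z$ that redistributes a fraction $\theta$ of the mass of $y=N$ among outputs $z_1,\dots,z_{N-1}$. Output $z_j$ collects $y=j$ together with a fraction $\lambda_j\ge 0$ of $y=N$, with $\sum_j\lambda_j=\theta$. The remaining outputs of $Y$, and the leftover fraction $1-\theta$ of $y=N$, are sent to other labels. Then
\[
P_Z(z_j)=q_j+\lambda_j q_N, \qquad P_{Z\mid X=j}(z_j)=\alpha+\lambda_j\beta ,
\]
so $\sum_{j<N}P_Z(z_j)=\delta$ exactly. Provided the maximizing $x$ for $z_j$ is $x=j$, the leakage of $z_j$ is $\log\frac{\alpha+\lambda_j\beta}{q_j+\lambda_j q_N}$. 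I would choose the $\lambda_j$ to make all these leakages equal to a common value $L$. Summing numerators and denominators then forces
\[
L=\log\frac{(N-1)\alpha+\theta\beta}{\delta},
\]
and the set $\{z_1,\dots,z_{N-1}\}$ has probability $\delta$ and minimum leakage $L$. This gives $\bar\varepsilon_Z(\delta)\ge L$, which is the middle branch.

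Solving the equalization for $\lambda_j$ gives explicit values, and the conditions for all $\lambda_j$ to be nonnegative are what produce the thresholds $\theta_1,\theta_2$. When $\theta\le\theta_1$, the mass $\theta$ is too small to pull $z_{N-1}$ down to the common level. In that case I would set $\lambda_{N-1}=0$ and spread $\theta$ over $z_1,\dots,z_{N-2}$ only, lowering their leakage to at least $\ell(X\to N-1)$. The minimum over the set is then $\ell(X\to N-1)$, which is the first branch. When $\theta>\theta_2$, the equalized value $L$ falls below $\ell(X\to N)$, so the bound from $\bar\varepsilon_Y$ is the better one. Together these recover $h_\delta(\theta)$ piece by piece.

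\emph{Main obstacle.} The key step is verifying that, for each $z_j$, the maximizing input is indeed $x=j$. Since $z_j$ contains part of $y=N$, the competitor is $x=N$, with $P_{Z\mid X=N}(z_j)=\beta+\lambda_j\alpha$. I would need $\alpha+\lambda_j\beta\ge\beta+\lambda_j\alpha$, and more generally need this uniformly under the chosen $\lambda_j$. The natural first attempt is to bound the largest $\lambda_j$ that can arise, which occurs for $j=N-1$ or near $\theta_2$. I would then check that the resulting condition reduces exactly to the stated prior assumption $p_N\le\frac{\alpha\sum_{j<N}p_j+\beta}{(N-2)\alpha+\beta}$, using $q_i=\beta+(\alpha-\beta)p_i$. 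Alongside this, the algebra showing that the nonnegativity thresholds are exactly $\theta_1$ and $\theta_2$ is routine but needs care.
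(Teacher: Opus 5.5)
Your construction is the paper's (your $\lambda_j$ are its $\eta_z$, with the rest of $\cY$ sent to other labels). Your mediant shortcut, summing numerators and denominators to get the common value $L$, is a cleaner route to the paper's $\tau(\theta)$. The three branches then follow as you describe, so the argument is correct.

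The one thing to fix is your ``main obstacle''. It is not an obstacle, and it cannot produce the prior hypothesis:
\begin{itemize}
\item Since $\lambda_j\le\theta\le 1$, we have $(\alpha+\lambda_j\beta)-(\beta+\lambda_j\alpha)=(\alpha-\beta)(1-\lambda_j)\ge 0$.
\item Inputs $x\notin\{j,N\}$ give only $(1+\lambda_j)\beta$.
\item For a lower bound you only need $\ell(z_j)\ge\log\frac{\alpha+\lambda_j\beta}{q_j+\lambda_j q_N}$, which holds by evaluating the maximum at $x=j$.
\end{itemize}

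In the paper, the hypothesis on $p_N$ becomes, after substituting $q_i=\beta+(\alpha-\beta)p_i$, exactly $\frac{(N-1)\alpha+\beta}{\sum_{j\le N}q_j}\le\frac{\alpha}{q_N}$. This is your common value at $\theta=1$ compared with $e^{\ell(X\to N)}$, i.e.\ the condition $\theta_2\le 1$. It only guarantees that all three branches of $h_\delta$ actually occur within $(0,1]$; the lower bound itself holds without it.

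Relatedly, only $\theta_1$ comes from nonnegativity of the $\lambda_j$. The threshold $\theta_2$ comes from comparing $L$ with $\ell(X\to N)$, as you correctly say later. Also, the largest equalized weight is $\lambda_1$, not $\lambda_{N-1}$, but this is immaterial.
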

\begin{IEEEproof}
 \begin{enumerate}
    \item The $k$-RR mechanism satisfies $\log \frac{\alpha}{q_1}$-PML, so we have $\varepsilon_c(\delta) \leq \log \frac{\alpha}{q_1}$ which holds for all $\delta \in (0,1)$. Suppose $\delta \in (0,q_1]$, and let $\cA = \{1\}$ which satisfies $P_Y(\cA) = q_1$. Then, we have 
    \begin{align*}
        \varepsilon_c(\delta) \geq \bar \varepsilon_{Y}(\delta) &= \max_{\substack{\cE \subset {\cY} \\ P_{Y}(\cE) \geq \delta}} \min_{j \in \mathcal{E}} \, \ell(X \to j)\\
        &\geq \min_{j \in \mathcal{A}} \ell(X \to j) = \log \frac{\alpha}{q_1}.
    \end{align*}

    \item We use Theorem~\ref{thm:upper_bound_envelope} and calculate the maximal leakage of the $k$-RR mechanism: 
    \begin{equation*}
        \cL(X \to Y) = \log \, \sum_{j \in [k]} \max_{i \in [k]} P_{Y \mid X=i}(j) = \log (k\alpha).  
    \end{equation*}
    This yields the upper bound
    \begin{equation*}
        \varepsilon_c(\delta) \leq \cL(X \to Y) + \log \frac{1}{\delta} = \log \frac{k \alpha}{\delta}. 
    \end{equation*}

    \item First, observe that \eqref{eq:envelope_lower_bound} yields 
    \begin{equation*}
        \varepsilon_c(\delta) \geq \bar \varepsilon_Y(\delta) = \log \frac{\alpha}{q_N} = \ell(X \to N),
    \end{equation*}
    for $\sum_{j=1}^{N-1} q_j < \delta \leq \sum_{j=1}^N q_j$. Below, we construct a post-processing $Z$ and tighten the lower bound to 
    \begin{equation*}
         \varepsilon_c(\delta) \geq \max \{\bar \varepsilon_Y(\delta),  \bar \varepsilon_Z(\delta)\}.  
    \end{equation*}

    % Given $N$ and $\theta$, let $P_{W \mid Y}$ be a channel with output alphabet $\cW = [N+1]$ expressed as 
    % \begin{equation*}
    %     P_{W\mid Y=j}(w) =
    %     \begin{cases}
    %         1, & \text{if } j \in [N-1] \text{ and } w = j, \\[0.25em]
    %         \theta, & \text{if } j = w = N, \\[0.25em]
    %         1 - \theta, & \text{if } j = N \text{ and } w= N+1, \\[0.25em]
    %         1, & \text{if } j > N \text{ and } w = N+1, \\[0.25em]
    %         0, & \text{otherwise.}
    %     \end{cases}
    % \end{equation*}
    % Consider the subset $\cA = [N]$ of outcomes of $W$ and observe that
    % \begin{equation*}
    %     P_W(\cA) = \sum_{w =1}^N P_W(w) = \sum_{w=1}^{N-1} q_w + \theta q_N = \delta. 
    % \end{equation*}
    % In addition, each $w \in \cA$ has PML 
    % \begin{equation*}
    %     \ell_{P_{XW}}(X \to w) = \log\frac{\max\limits_x P_{W\mid X}(w\mid x)}{P_Z(w)} = \log \frac{\alpha}{q_w}.
    % \end{equation*}
    % Thus, using $W$, we obtain the lower bound 
    % \begin{equation*}
    %     \varepsilon_c(\delta) = \bar \varepsilon_W(\delta) \geq \min_{w \in \cA} \ell_{P_{XW}}(X \to w) = \log \frac{\alpha}{q_N}.
    % \end{equation*}
    
    Let $\mathcal{Z} = [N]$ and $\eta = (\eta_1,\dots,\eta_{N})$ be a tuple satisfying 
    \begin{gather}
        \eta_z \geq 0, \quad \text{for all } z \in [N],\nonumber \\
        \sum_{z=1}^{N-1} \eta_z = \theta, \quad \eta_{N} = 1 - \theta. \label{eq:eta_constraint}
    \end{gather}
    Define
    \begin{equation*}
        P_{Z\mid Y=j}(z) =
        \begin{cases}
            1, & \text{if } j \in [N-1] \text{ and } z = j, \\[0.25em]
            \eta_z, & \text{if } j = N \text{ and } z \in [N], \\[0.25em]
            1, & \text{if } j > N \text{ and } z = N, \\[0.25em]
            0, & \text{otherwise.}
        \end{cases}
    \end{equation*}
    In words, each $k$-RR outcome $j \in [N-1]$ is deterministically mapped to $z \in [N-1]$, all $j > N$ are mapped to a single catch-all symbol $z = N$, and $j = N$ is split across $z = 1, \dots, N$ with weights $\eta_1,\dots,\eta_{N}$.

    Consider the event $\cB = [N-1]$ of outcomes of $Z$, and note that by construction, we have 
    \begin{align*}
        P_Z(\cB) &= \sum_{z=1}^{N-1} P_Z(z)\\
        &= \sum_{z=1}^{N-1} q_z + \eta_z q_N \\
        &= \sum_{z=1}^{N-1} q_z + \theta q_N = \delta.        
    \end{align*}
    Thus, we may use the set $\cB$ to obtain lower bounds on the PML envelope: 
    \begin{equation*}
        \varepsilon_c(\delta) = \bar \varepsilon_Z(\delta) \geq \min_{z \in \cB} \ell_{P_{XZ}}(X \to z). 
    \end{equation*}
    Our goal is to optimize the weights $\{\eta_i\}_{i=1}^N$ in order to obtain the tightest possible lower bound.  Note that if $N=2$, then \eqref{eq:eta_constraint} forces $\eta_1 = \theta$, so $P_{Z \mid Y}$ is fully specified. Therefore, for the rest of the proof assume that $N>2$. 
    
    We begin by calculating the PML for symbols in the set $\cB$. For each $i \in [N-1]$, using the structure of the $k$-RR mechanism and $P_{Z \mid Y}$, we observe that  
    \begin{gather*}
        \max_x P_{Z\mid X}(i\mid x) = P_{Z\mid X}(i\mid i) = \alpha + \eta_i \beta,\\
         P_Z(i) = q_i + \eta_i q_N.
    \end{gather*}
    Hence, the PML is 
    \begin{equation*}
        \ell_{P_{XZ}}(X \to i) = \log\frac{\max\limits_x P_{Z\mid X}(i\mid x)}{P_Z(i)} = \log \frac{\alpha + \eta_i \beta}{q_i + \eta_i q_N},
    \end{equation*}     
    Let 
    \begin{equation*}
        M_i(\eta_i) \coloneqq \frac{\alpha + \eta_i \beta}{q_i + \eta_i q_N}, \quad i=1, \dots, N-1, 
    \end{equation*}    
    and consider the optimization problem:  
    \begin{gather*}
        \max_{\eta_1, \dots, \eta_{N-1}} \, \min_{i \in [N-1]} M_i(\eta_i),\\
        \text{subject to } \sum_{i=1}^{N-1} \eta_i = \theta,\\
        \eta_i \geq 0, \; i=1,\dots,N-1. 
    \end{gather*}
    
    It is easy to verify that $M_i''(\eta_i) \geq 0$, so $M_i$ is convex. Thus, the above optimization problem is \emph{not} a convex one (since the minimum of a collection of convex functions need not be convex). Nevertheless, we can solve it by inspection.

    \smallskip
    \textbf{First regime.}
    Let us start by noting that $M_i'(\eta_i) < 0$ for $\eta_i \geq 0$, implying that $M_i(\eta_i) \leq M_i(0) = \frac{\alpha}{q_i}$ for all $i \in [N-1]$. This yields the upper bound on the objective
    \begin{equation}
    \label{eq:rr_opt_upper_bound}
        \max_{\eta_1, \dots, \eta_{N-1} \geq 0} \min_{i \in [N-1]} M_i(\eta_i) \leq \min_{i\in [N-1]} \frac{\alpha}{q_i} = \frac{\alpha}{q_{N-1}}, 
    \end{equation}
    since $q_1 \leq \dots \leq q_k$. This bound is achievable at $\theta = 0$ since $\eta_1 = \dots = \eta_{N-1} = 0$ is feasible at this point.\footnote{Technically, we assume that $\theta>0$, but we may consider the limiting value of the objective as $\theta \downarrow 0$ since the $M_i$'s are continuous.} Next, we argue that there exists $\theta_1 \geq 0$ such that the upper bound in~\eqref{eq:rr_opt_upper_bound} is achievable for $\theta \in (0,\theta_1]$. This is because in order to achieve $\frac{\alpha}{q_{N-1}}$, all we need is to have $\eta_{N-1} = 0$ and $M_i(\eta_i) \geq \frac{\alpha}{q_{N-1}}$ for $i=1, \dots, N-2$. 

    Let $\eta_i^*$ be such that $M_i(\eta_i^*) = \frac{\alpha}{q_{N-1}}$, that is, 
    \begin{align*}
        &M_i(\eta_i^*) = \frac{\alpha + \eta_i^* \beta}{q_i + \eta_i^* q_N} = \frac{\alpha}{q_{N-1}} \Longleftrightarrow\\
        &\eta^*_i =  \frac{\alpha (q_{N-1} - q_i)}{\alpha q_N - \beta q_{N-1}} \geq 0, \quad i=1, \dots, N-2,   
    \end{align*}
    and also $\eta_{N-1}^* = 0$. This choice of the parameters yields
    \begin{equation*}
        \theta_1 = \sum_{i=1}^{N-1} \eta_i^* = \frac{\alpha \Big( (N-2) q_{N-1} - \sum_{i=1}^{N-2} q_i \Big)}{\alpha q_N - \beta q_{N-1}} \geq 0. 
    \end{equation*}
    Therefore, assuming that $\theta_1 > 0$, for $\theta \in (0,\min \{\theta_1,1\}]$, we have the first piece of the lower bound 
    \begin{align*}
        \varepsilon_c(\delta) &\geq \min_{z \in \cA} \ell_{P_{XZ}}(X \to z)\\
        &= \min_{i \in [N-1]} \log M_i(\eta_i^*) = \log \frac{\alpha}{q_{N-1}}.  
    \end{align*}
    Note that in this regime, we use the lower bound obtained from $Z$ and not $Y$ itself, since $\frac{\alpha}{q_{N-1}} \geq \frac{\alpha}{q_{N}}$.  
    
    %%%%%%%%%%%%%%%%%%%%%%%%%%%%%%%
    \smallskip
    \textbf{Second regime.}
    Next, suppose $\theta_1 <1$ and $\theta > \theta_1$. In the second regime, we are forced to increase at least one $\eta_i$ beyond $\eta_i^*$, so the objective falls below $\frac{\alpha}{q_{N-1}}$. Let $\{\tilde \eta_i\}$ denote the optimal parameters. There exists a common threshold $\tau \in [\frac{\alpha + \beta}{q_1 + q_N}, \frac{\alpha}{q_{N-1}}]$ such that 
    \begin{align*} 
    M_i(\tilde \eta_i) = \frac{\alpha + \tilde \eta_i \beta}{q_i + \tilde \eta_i q_N} = \tau \Longleftrightarrow \tilde \eta_i(\tau) = \frac{\alpha - \tau q_i}{\tau q_N - \beta} > 0, 
    \end{align*}
    for $i=1, \dots, N-1$.\footnote{Note that each $M_i$ is continuous and strictly decreasing. Therefore, if $M_i(\eta_i) > M_j( \eta_j)$, then there exists $\zeta >0$ such that $M_j( \eta_j) < M_j( \eta_j - \zeta) = M_i( \eta_i + \zeta)$. Thus, the optimal parameters must yield a common value for all $M_i$'s.} Hence, $\theta$ can be expressed as 
    \begin{align*}
        \theta(\tau) = \sum_{i=1}^{N-1} \tilde \eta_i(\tau) = \frac{(N-1) \alpha - \tau \sum_{i=1}^{N-1} q_i}{\tau q_N - \beta}. 
    \end{align*}
    Solving for $\tau$ gives 
    \begin{equation*}
        \tau(\theta) = \frac{(N-1) \alpha + \theta \beta}{\sum_{i=1}^{N-1} q_i + \theta q_N} = \frac{(N-1) \alpha + \theta \beta}{\delta}, \quad \theta >  \theta_1. 
    \end{equation*}
    Therefore, for $\theta \in (\theta_1,\theta_2]$ (with $\theta_2$ specified below), we get the middle piece of the lower bound  
    \begin{equation}
    \label{eq:lower_bound_Z}
        \varepsilon_c(\delta) \geq \min_{i \in [N-1]} \log M_i(\tilde \eta_i) = \log \frac{(N-1) \alpha + \theta \beta}{\delta}.
    \end{equation}

    %%%%%%%%%%%%%%%%%%%%%%%%%%%%%%%
    \textbf{Third regime.}
    The point $\theta_2$ is the value where any further increase in $\theta$ would make the lower bound in~\eqref{eq:lower_bound_Z} drop below $\bar \varepsilon_Y(\delta)$, i.e.,
    \begin{equation*}
        \tau(\theta_2) = \frac{\alpha}{q_N} \;\; \Longleftrightarrow\;\;
        \theta_2 = \frac{\alpha\bigl( (N-1)q_N - \sum_{i=1}^{N-1} q_i\bigr)}
               {q_N(\alpha - \beta)}.
    \end{equation*}
    Thus, in the third regime $\theta \in (\theta_2, 1]$ we use
    \begin{equation*}
         \varepsilon_c(\delta) \geq \bar \varepsilon_Y(\delta) = \log \frac{\alpha}{q_N}. 
    \end{equation*}

    \begin{figure*}[t]
     \centering
     \begin{subfigure}[b]{0.9\columnwidth}
         \centering
         \includegraphics[width=\linewidth]{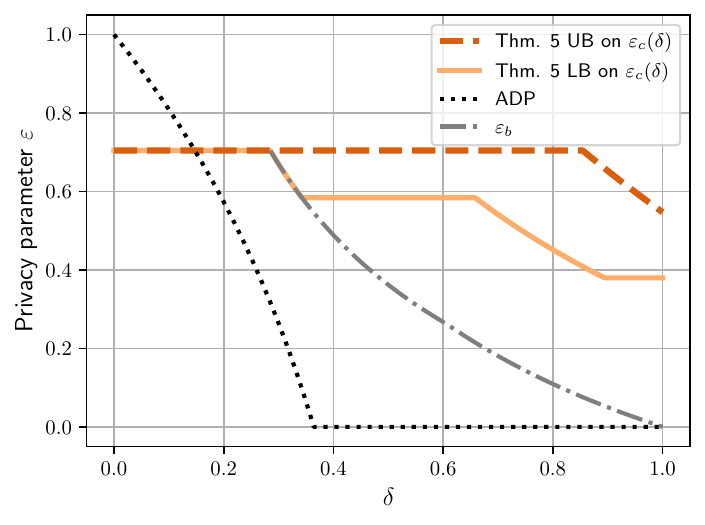}
         \caption{$k=3$ and $P_X = (0.2, 0.3, 0.5)$.}
         \label{fig:k3}
     \end{subfigure}
     \hfill
     \begin{subfigure}[b]{0.9\columnwidth}
         \centering
         \includegraphics[width=\linewidth]{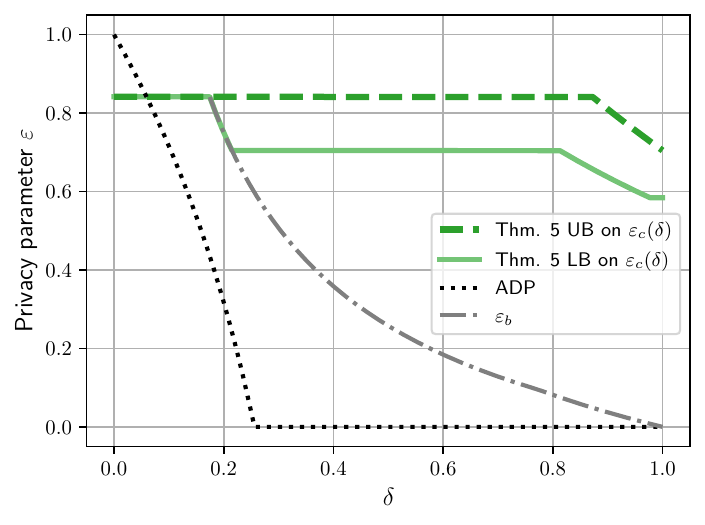}
         \caption{$k=5$ and $P_X = (0.1, 0.2, 0.2, 0.2, 0.3)$.}
         \label{fig:k5}
     \end{subfigure}
     
     \vspace{0.5em}
     \begin{subfigure}[b]{0.9\columnwidth}
         \centering
         \includegraphics[width=\linewidth]{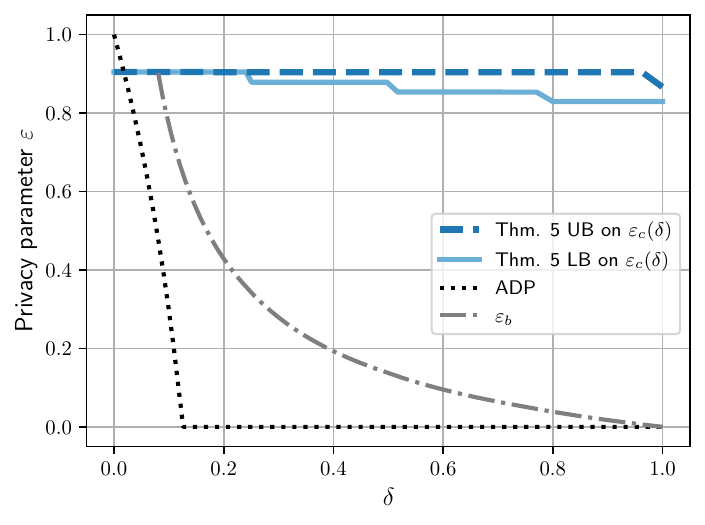}
         \caption{$k=12$ and $P_X$ is the four-level prior with $\rho=0.1$.}
         \label{fig:k12}
     \end{subfigure}
     \hfill
     \begin{subfigure}[b]{0.9\columnwidth}
         \centering
         \includegraphics[width=\linewidth]{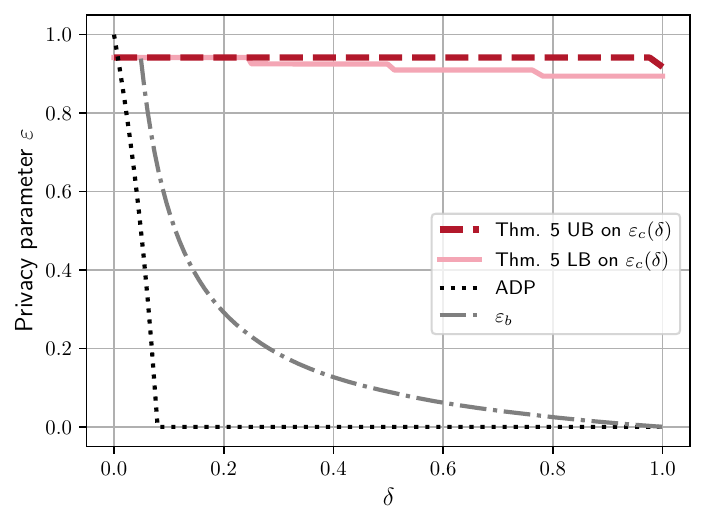}
         \caption{$k=20$ and $P_X$ is the four-level prior with $\rho=0.1$.}
         \label{fig:k20}
     \end{subfigure}
        \caption{PML envelope upper bounds (UBs), lower bounds (LBs), $\varepsilon_b$, and $(\varepsilon_{\mathrm{DP}}(\delta),\delta)$-DP guarantees for the $k$-RR mechanism with $\varepsilon_r = 1.0$. Each sub-figure illustrates a different alphabet size $k$ and prior $P_X$.}
        \label{fig:krr}
\end{figure*}
    
    Finally, we find the condition on the prior distribution ensuring that $\theta_2 \leq 1$. Two conditions need to be satisfied for this: We require $\theta_1 \leq 1$ (ensuring that we enter the second regime) and also $\tau(1) \leq \frac{\alpha}{q_N}$ (ensuring that we enter the third regime). By using $q_i = \beta + (\alpha - \beta) p_i$, and after some algebra, we obtain the following conditions: 
    \begin{gather}
        p_{N-1} \leq \frac{\alpha \sum_{i=1}^N p_i + \beta}{(N-1) \alpha + \beta}, \label{eq:cond_1}\\
        p_{N} \leq \frac{\alpha \sum_{i=1}^{N-1} p_i + \beta}{(N-2) \alpha + \beta}.\label{eq:cond_2}
    \end{gather}
    Observe that \eqref{eq:cond_2} can be written in the form 
    \begin{equation*}
        (N-2) \alpha + \beta \leq \frac{\alpha \sum_{i=1}^{N-1} p_i + \beta}{p_N},
    \end{equation*}
    which implies that 
    \begin{align*}
        \frac{\alpha \sum_{i=1}^N p_i + \beta}{(N-1) \alpha + \beta} \geq \frac{\alpha \sum_{i=1}^N p_i + \beta}{\alpha + \frac{\alpha \sum_{i=1}^{N-1} p_i + \beta}{p_N}} = p_N \geq p_{N-1},
    \end{align*}
    so if \eqref{eq:cond_2} is satisfied, \eqref{eq:cond_1} is also automatically satisfied. 
\end{enumerate}
\end{IEEEproof}

The lower bounds in Theorem~\ref{thm:k-rr} are stated under the assumption that $0<\theta_1<\theta_2$. The following degenerate cases are handled by interpreting the piecewise expression in the natural way: (i) If $N=2$ or $q_1=\cdots=q_{N-1}$, then $\theta_1=0$, so the first regime is vacuous and omitted; (ii) If $q_{N-1}=q_N$, then $\theta_1=\theta_2$, so the second regime is vacuous and omitted. In this case, the bound equals $\ell(X\to N-1)=\ell(X\to N)$ for all $\theta\in(0,1]$. Note that if $P_X$ is uniformly distributed, then $\ell(X \to j) = \log(k\alpha)$ for all $j \in [k]$, and applying Corollary~\ref{cor:constant_pml} gives $\varepsilon_c(\delta) = \log(k\alpha)$ for all $\delta \in (0,1)$.

The condition on the prior in Theorem~\ref{thm:k-rr} ensures that all three segments of $h_\delta$ can be active, i.e., it guarantees that $\theta_2 \leq 1$. We conjecture that under this condition, $h_\delta$ coincides with $\varepsilon_c$, that is, the construction used in the proof of Theorem~\ref{thm:k-rr} is in fact optimal. Establishing this conjecture would require either proving the optimality of the construction or deriving a matching upper bound. We leave this as an open problem for future work.

Figure~\ref{fig:krr} illustrates the bounds in Theorem~\ref{thm:k-rr}. For comparison, we also plot $\varepsilon_b$ and the ADP curve of the $k$-RR mechanism. Recall from \cite[Thm.~2]{ballePrivacyProfilesAmplification2020a} that, for fixed $\delta \in (0,1)$, the $k$-RR mechanism satisfies $(\varepsilon_{\text{DP}}(\delta), \delta)$-DP with 
\begin{equation*}
    \varepsilon_{\text{DP}}(\delta) = \max \big\{\log \frac{\alpha - \delta}{\beta}, 0 \big\}.
\end{equation*}
Moreover, for $\delta \in (q_1, 1)$, the binary envelope $\varepsilon_b(\delta)$ is derived in Appendix~\ref{ssec:binary_envelope_krr} and is given by
\begin{equation*}
    \varepsilon_b(\delta) = \log \frac{\alpha + (N-2 + \theta) \beta}{\delta}.
\end{equation*}
Here, we consider the $k$-RR mechanism with $\varepsilon_r = 1.0$ for $k \in \{3, 5, 12, 20 \}$ under various prior distributions. To simplify the specification of priors over larger alphabets, we use a \emph{four-level} construction obtained by partitioning the alphabet into four blocks of equal size (assuming $k$ is divisible by $4$). Fix $\rho \in (0,1/3)$ and define
\begin{equation*}
    m_0 = 1-3\rho,\quad
    m_1 = 1-\rho,\quad
    m_2 = 1+\rho,\quad
    m_3 = 1+3\rho.
\end{equation*}
Then, set 
\begin{equation*}
    p_i = \frac{m_{r(i)}}{k}, \quad r(i) \coloneqq \left\lfloor \frac{4 (i-1)}{k} \right \rfloor,
\end{equation*}
for $i \in [k]$. With this construction, $P_X$ is constant on blocks of size $\frac{k}{4}$, and satisfies $\sum_{i=1}^k p_i=1$.

We make the following observations from Figure~\ref{fig:krr}. 
First, the binary envelope $\varepsilon_b(\delta)$ overlaps with the lower bound in Theorem~\ref{thm:k-rr} when $\delta$ is slightly above $q_1$, i.e., when $N=2$ and $\theta \leq \theta_2$ (assuming that $q_1 < q_2)$. Outside this regime, $\varepsilon_b(\delta)$ can be a significantly looser lower bound than $h_\delta$, with the gap becoming more pronounced as $k$ increases. 
Second, when $\delta$ is small, $\log(\alpha/q_1)$ is the tighter upper bound on $\varepsilon_c$ but as $\delta$ increases, the maximal-leakage upper bound $\log(k\alpha/\delta)$ becomes tighter. 
Third, as $k$ grows, the gap between the upper and lower bounds in Theorem~\ref{thm:k-rr} shrinks.

Finally, the PML envelope is qualitatively and quantitatively different from ADP. As $\delta$ varies from $0$ to $1$, $\varepsilon_{\text{DP}}(\delta)$ ranges from $\varepsilon_r$ down to $0$. In contrast, $\varepsilon_c(\delta)$ equals $\log(\alpha/q_1)$ as $\delta \to 0$ (which is strictly smaller than $\varepsilon_r$) and can remain bounded away from $0$ as $\delta \to 1$. We also observe that as $k$ increases, $\varepsilon_{\text{DP}}(\delta)$ decreases more quickly in $\delta$, implying that $k$-RR provides strong DP guarantees even for small $\delta$.  This behavior is not observed for the PML envelope: even for large $k$, $\varepsilon_c(\delta)$ may decrease only mildly with $\delta$. We emphasize that all PML quantities depend on the prior $P_X$, whereas the ADP curve depends only on $k$ and $\varepsilon_r$.

\section{Other Related Works}
\label{sec:sota}
While the PML envelope has an information-theoretic flavor, it is most closely related to works proposing DP definitions, where post-processing robustness is treated as an axiom. Since the introduction of ADP, several other DP notions have been proposed, primarily to provide more convenient machinery for composition. Notable examples include \Ren DP~\cite{mironovRenyiDifferentialPrivacy2017a}, concentrated DP~\cite{bunConcentratedDifferentialPrivacy2016a}, and Gaussian DP~\cite{dongGaussianDifferentialPrivacy2022}. It is worth emphasizing that while modern systems often rely on these frameworks for privacy accounting (i.e., computing composition bounds during the internal calculations of an algorithm), the resulting guarantees are typically converted back to $(\varepsilon,\delta)$-DP for reporting and comparison purposes.

The work closest in spirit to ours is that of \textcite{ballePrivacyAmplificationSubsampling2018,ballePrivacyProfilesAmplification2020a} introducing \emph{privacy profiles}. The privacy profile is a function that precisely captures the tradeoff between $\varepsilon$ and $\delta$, similar to the PML envelope. Nonetheless, as explained in Section~\ref{sec:background}, $\delta$ in the privacy profile corresponds to an additive slack parameter, whereas in our setting it captures the worst-case failure probability.

\section{Conclusions and Future Work}
\label{sec:conclusions}
We introduced the PML envelope as a meaningful and robust privacy guarantee that captures the tradeoff between information leakage and failure probability under arbitrary post-processing. The present work focused on settings in which both the secret and the observable outcomes take values in finite probability spaces. This assumption provides a simple and mathematically convenient context for introducing the PML envelope. 

The two classes of mechanisms studied here (PML-extremal and $k$-RR) illustrate the different roles of the upper bounds in Theorem~\ref{thm:upper_bound_envelope}. In finite-alphabet settings, the second term is often the tighter one, especially when $\delta$ is small, since the maximal-leakage bound includes an additive $\log(1/\delta)$ term. The tightness of the second term is also discussed in Corollary~\ref{cor:constant_pml}. Still, the maximal-leakage bound has two useful features. First, it remains applicable in settings where the secret does not have finite support, where uniform $\varepsilon$-PML bounds usually do not exist. Second, maximal leakage is always bounded by $\log |\mathcal{Y}|$, independently of the mechanism and the prior on the secret, and therefore yields the general bound
\begin{equation*}
    \varepsilon_c(\delta) \leq \log \frac{\abs{\cY}}{\delta}, \quad \delta \in (0,1).
\end{equation*}
This provides a simple and meaningful privacy guarantee that holds for \emph{any} secret and \emph{any} computation with a finite output alphabet, even when no randomness is involved. Such bounds can be useful, for example, in statistical settings that involve deterministic recoding or quantization of the input data. The systematic study of these consequences is left for future work.

Another direction concerns the behavior of $\varepsilon_c$ under composition. Establishing sequential or adaptive composition results for the PML envelope beyond the maximal-leakage bounds remains an important topic for future work.

% In these settings, due to~\eqref{eq:pml_bounds}, the joint distribution $P_{XY}$ always satisfies $\varepsilon$-PML for some finite $\varepsilon$. The full force of the PML envelope, however, emerges in infinite settings, where such uniform bounds may not exist. Consequently, an important direction for future work is to study the envelope in more general contexts, for instance when the secret has infinite support or when unbounded noise e.g., Gaussian noise is added to the data. 

% in particular, the Gaussian mechanism, which is arguably the most commonly used mechanism both in terms of theoretical developments and in practical implementations.

% conference papers do not normally have an appendix

% use section* for acknowledgment
% \section*{Acknowledgment}

% The authors would like to thank...

% trigger a \newpage just before the given reference
% number - used to balance the columns on the last page
% adjust value as needed - may need to be readjusted if
% the document is modified later
%\IEEEtriggeratref{8}
% The "triggered" command can be changed if desired:
%\IEEEtriggercmd{\enlargethispage{-5in}}

\appendices
%%%%%%%%%%%%%%%%%%%%%%
%%%%%%%%%%%%%%%%%%%%%%
% \section{Equivalence of \texorpdfstring{\eqref{eq:lcq_func} and \eqref{eq:var_lcq_func}}{Eq. \ref{eq:lcq_func} and Eq. \ref{eq:var_lcq_func}}}
\section{Equivalence of \eqref{eq:lcq_func} and \eqref{eq:var_lcq_func}}
\label{ssec:quantile_equivalence_proof}
\begin{lemma}
\label{lemma:lcq_formulations}
For all $\delta \in (0,1)$, it holds that 
\begin{equation*}
    \inf \{t \geq 0 : C_Z(t) \geq 1 - \delta \} = \min_{\substack{\mathcal{A} \subset \cZ \\ P_Z(\mathcal{A}) \geq 1 - \delta}} \; \max_{z \in \mathcal{A}} \, \ell(z). 
\end{equation*}
\end{lemma}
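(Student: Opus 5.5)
We will prove the two inequalities separately, writing $t^\star \coloneqq \inf \{t \geq 0 : C_Z(t) \geq 1-\delta\}$ for the left-hand side. Both sides are governed by the finitely many values $\ell(z)$, $z \in \cZ$, since $\cZ$ is finite. Two preliminary remarks will be used throughout. First, the minimum on the right-hand side is attained, because there are finitely many subsets $\cA \subset \cZ$ and $\cA = \cZ$ is always feasible. Second, $C_Z$ is a right-continuous step function, since $\ell(Z)$ takes finitely many values. Hence the infimum defining $t^\star$ is attained, so $C_Z(t^\star) \geq 1-\delta$. Moreover, $t^\star$ equals either $0$ or one of the values $\ell(z)$; in fact $t^\star \geq 0$ always holds, since $\ell \geq 0$ by \eqref{eq:pml_bounds}.

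\textbf{Step 1 (RHS $\le$ LHS).} Take the set $\cA^\star \coloneqq \{z \in \cZ : \ell(z) \leq t^\star\}$. By right-continuity, $P_Z(\cA^\star) = C_Z(t^\star) \geq 1-\delta$, so $\cA^\star$ is feasible. Its maximum leakage satisfies $\max_{z \in \cA^\star} \ell(z) \leq t^\star$; this maximum is well defined because $\cA^\star$ is nonempty, having positive probability. Consequently the right-hand side is at most $t^\star$.

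\textbf{Step 2 (LHS $\le$ RHS).} Fix any feasible $\cA$ with $P_Z(\cA) \geq 1-\delta$, and let $m \coloneqq \max_{z \in \cA} \ell(z)$. Every $z \in \cA$ has $\ell(z) \leq m$, so $\cA \subseteq \{\ell \leq m\}$ and therefore $C_Z(m) \geq P_Z(\cA) \geq 1-\delta$. Since $m \geq 0$, the value $m$ lies in the set whose infimum is $t^\star$, which gives $t^\star \leq m$. Minimizing over all feasible $\cA$ yields the claim.

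The only delicate point is nonemptiness of the sets involved. If $P_Z$ has zero-probability atoms, we restrict $\cZ$ to its support, and a feasible $\cA$ always has positive probability because $1-\delta > 0$. We also need to confirm that attainment of the infimum via right-continuity is justified; it is, since $C_Z$ has finitely many jumps. Beyond these checks the argument is elementary, and no earlier result of the paper is needed apart from the nonnegativity bound \eqref{eq:pml_bounds}.
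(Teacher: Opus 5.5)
Your proof is correct and is essentially the paper's argument: the sublevel set $\{z : \ell(z) \leq t^\star\}$ (the paper's $\cJ_r$) shows that the right-hand side is at most the quantile, and the inclusion $\cA \subseteq \{z : \ell(z) \leq \max_{z' \in \cA} \ell(z')\}$ gives the reverse inequality, which the paper phrases as a proof by contradiction. You also explicitly use right-continuity to get $C_Z(t^\star) \geq 1-\delta$ and $\ell \geq 0$ to place $m$ in the index set, two points the paper passes over quickly (it even writes $\max_{z \in \cJ_r} \ell(z) = r$, where only $\leq r$ holds).
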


\begin{IEEEproof}
Let 
\begin{gather*}
    r_1 \coloneqq \min_{\substack{\mathcal{A} \subset \cZ \\ P_Z(\mathcal{A}) \geq 1 - \delta}} \; \max_{z \in \mathcal{A}} \, \ell(z), \\[0.5em]
    r_2 \coloneqq \inf \{t \geq 0 : P_Z \{z : \ell(z) \leq t \} \geq 1 - \delta \}. 
\end{gather*} 

Given $r \geq 0$, let $\cJ_r = \big\{z : \ell(z) \leq r \big\}$, and note that $\max_{z \in \cJ_r} \ell(z) = r$. By definition, if $r < r_2$, then $P_Z(\cJ_r) < 1-\delta$, therefore $P_Z(\cJ_r) \geq 1-\delta$ for all $r \geq r_2$. Therefore, we have
\begin{align*}
    r_1 &= \min_{\substack{\mathcal{A} \subset \cZ \\ P_Z(\mathcal{A}) \geq 1 - \delta}} \; \max_{z \in \mathcal{A}} \, \ell(z)\\
    &\leq \inf_{\cJ_r : r \geq r_2} \; \max_{z \in \cJ_r} \, \ell(z)\\
    &\leq \inf_{r \geq r_2} \; r = r_2. 
\end{align*}

Next, we argue that the strict inequality $r_1 < r_2$ would lead to a contradiction. To see this, suppose $r_1 < r_2$. This means that there exists a set $\cE \subset \cZ$ with $P_Z(\cE) \geq 1 - \delta$ such that $r_1 \leq r_3 = \max_{z \in \cE} \ell(z) < r_2$. On the other hand, note that 
\begin{equation*}
    \cE \subseteq \big\{z : \ell(z) \leq r_3 \big\}, 
\end{equation*} 
therefore, 
\begin{equation*}
     P_Z \big\{z : \ell(z) \leq r_3 \big\} \geq P_Z (\cE) \geq 1- \delta. 
\end{equation*}
Hence, 
\begin{equation*}
    r_2 = \inf \{t \geq 0 : P_Z \{z : \ell(z) \leq t \} \geq 1 - \delta \} \leq r_3, 
\end{equation*}
which is a contradiction. We conclude that $r_1 = r_2$. 
\end{IEEEproof}

%%%%%%%%%%%%%%%%%%%%%%
%%%%%%%%%%%%%%%%%%%%%%
\section{Proof of Theorem~\ref{thm:comp_eps}}
\label{sec:thm_comp_eps_proof}
It is well-known that the right-continuous quantile function upper bounds the left-continuous quantile function. Nevertheless, we include a proof for completeness. Fix some random variable $Z$ with PML $\ell(Z)$, and $\delta \in (0,1)$. Let $0 \leq \alpha < \ubar \varepsilon_Z(\delta)$ and consider the set 
\begin{equation*}
    \cB_\alpha = \{z \in \cZ : \ell(z) \geq \alpha\}.
\end{equation*}
Then, $P_Z(\cB_\alpha) > \delta$,\footnote{This can be shown by contradiction: If $P_Z(\cB_\alpha) \leq \delta$, then the set $\cB_\alpha^\mathsf c = \cZ \setminus \cB_\alpha$ has probability 
$P_Z(\cB_\alpha^\mathsf c) \geq 1 - \delta$ and also satisfies $\max_{z \in \cB_\alpha^\mathsf c} \ell(z) \leq \alpha < \ubar \varepsilon_Z(\delta)$. This contradicts the definition of $\ubar \varepsilon_Z(\delta)$ as the smallest upper bound on the PML of all sets with probability at least $1-\delta$.} and we get 
\begin{equation*}
   \bar \varepsilon_Z(\delta) = \max_{\substack{\cA \subset \cZ: \\ P_Z(\cA) \geq \delta}} \quad \min_{z \in \cA} \; \ell(z) \geq \min_{z \in \cB_\alpha} \; \ell(z) \geq \alpha.  
\end{equation*}
Letting $\alpha \to \ubar \varepsilon_Z(\delta)$ yields
\begin{equation*}
   \bar \varepsilon_Z(\delta) \geq \ubar \varepsilon_Z(\delta). 
\end{equation*}

Next, we prove the opposite inequality, i.e.,  
\begin{equation*}
    \sup_{Z : X - Y - Z} \; \ubar \varepsilon_Z(\delta)
    \geq \sup_{Z : X - Y - Z} \; \bar \varepsilon_Z(\delta).
\end{equation*}
Fix $\delta \in (0,1)$ and suppose 
\begin{equation*}
    \ubar c = \ubar \varepsilon_Y(\delta) < \bar \varepsilon_Y(\delta) = \bar c, 
\end{equation*}
where $\bar c, \ubar c >0$. 
This happens if and only if there exists a subset $\cB \subset \cY$ with probability $P_Y(\cB) \geq \delta$ and $\min_{y \in \cB} \ell(y) = \bar c$ and $\cG = \cY \setminus \cB$ and $\max_{y \in \cG} \ell(y) = \ubar c$. Therefore, there exist outcomes $y_1 \in \cB, y_2 \in \cG$ such that 
\begin{gather}
\begin{split}
\label{eq:good_outcome_bad_outcome}
    \ell(X \to y_1) \geq \bar c, \\
    \ell(X \to y_2) \leq \ubar c. 
\end{split}
\end{gather}
Fix a parameter $\eta \in (0,1)$ and let $B \sim \mathrm{Bernoulli}(\eta)$ be independent of $(X,Y)$. Define $Z = h_\eta(Y,B)$ where 
\begin{equation*}
 h_{\eta}(Y,B) \;=\;
\begin{cases}
\bot,      & \text{if } Y = y_1,\\
\bot,      & \text{if } Y = y_2 \;\; \text{and} \;\; B = 1,\\
\diamond,  & \text{if } Y = y_2 \;\; \text{and} \;\; B = 0,\\
Y,         & \text{otherwise}.
\end{cases}
\end{equation*}
Note that $Z$ is a randomized function of $Y$ since it also depends on $B$. Now, observe that 
\begin{gather*}
    P_Z(\bot) = P_Y(y_1) + \eta P_Y(y_2), \\
    P_{Z \mid X=x}(\bot) =  P_{Y \mid X=x}(y_1) + \eta P_{Y \mid X=x}(y_2). 
\end{gather*}
Then, for fixed $x$ we have 
\begin{align*}
    &\frac{P_{Z \mid X=x} (\bot)}{P_Z(\bot)}= \frac{P_{Y \mid X=x}(y_1) + \eta P_{Y \mid X=x}(y_2)}{P_Y(y_1) + \eta P_Y(y_2)}\\[0.5em]
    &= \frac{\frac{P_{Y \mid X=x}(y_1)}{P_Y(y_1)} + \eta \left(\frac{P_{Y \mid X=x}(y_2)}{P_Y(y_2)}\right) \left(\frac{P_Y(y_2)}{P_Y(y_1)} \right)}{1 + \eta \frac{P_Y(y_2)}{P_Y(y_1)}}\\[0.5em]
    &= \Bigg( \frac{P_{Y \mid X=x}(y_1)}{P_Y(y_1)} + \eta \left(\frac{P_{Y \mid X=x}(y_2)}{P_Y(y_2)}\right) \cdot \left(\frac{P_Y(y_2)}{P_Y(y_1)} \right) \Bigg) \cdot\\
    &\hspace{12em} \left(1 - \eta \left(\frac{P_Y(y_2)}{P_Y(y_1)} \right) + O(\eta^2) \right)\\
    &= \frac{P_{Y \mid X=x}(y_1)}{P_Y(y_1)} - \eta \left(\frac{P_Y(y_2)}{P_Y(y_1)} \right) \cdot\\
    &\hspace{7em}\left( \frac{P_{Y \mid X=x}(y_1)}{P_Y(y_1)} -  \frac{P_{Y \mid X=x}(y_2)}{P_Y(y_2)} \right) + O(\eta^2). 
\end{align*}
Taking the logarithm and maximum over $x \in \cX$ on both sides gives 
\begin{align}
    &\ell_{P_{XZ}} (X \to \bot) = \log \; \max_x \frac{P_{Z \mid X=x} (\bot)}{P_Z(\bot)}\nonumber\\[0.5em]
    &= \log \; \max_x \Bigg(\frac{P_{Y \mid X=x}(y_1)}{P_Y(y_1)} - \eta \left(\frac{P_Y(y_2)}{P_Y(y_1)} \right) \cdot \nonumber
    \\
    &\hspace{6.5em}\left( \frac{P_{Y \mid X=x}(y_1)}{P_Y(y_1)} -  \frac{P_{Y \mid X=x}(y_2)}{P_Y(y_2)} \right) + O(\eta^2)\Bigg) \nonumber\\[0.5em]
    &\geq \log \Bigg( \max_x \frac{P_{Y \mid X=x}(y_1)}{P_Y(y_1)} - \eta \left(\frac{P_Y(y_2)}{P_Y(y_1)} \right) \cdot\nonumber\\
    &\hspace{4em}\max_{x'} \left( \frac{P_{Y \mid X=x'}(y_1)}{P_Y(y_1)} -  \frac{P_{Y \mid X=x'}(y_2)}{P_Y(y_2)} \right) + O(\eta^2) \Bigg),\nonumber\\[0.5em]
    &\geq \log \Big( e^{\bar c} - \eta \beta + O(\eta^2) \Big)\nonumber\\[0.5em]
    &= \bar c + \log \big(1-\eta\beta e^{-\bar c}+ O(\eta^2) \big), \label{subeq:lower_bound_symbol}
\end{align}
where
\begin{equation*}
    \beta = \left(\frac{P_Y(y_2)}{P_Y(y_1)} \right)\max_{x'} \left( \frac{P_{Y \mid X=x'}(y_1)}{P_Y(y_1)} -  \frac{P_{Y \mid X=x'}(y_2)}{P_Y(y_2)} \right).
\end{equation*}
Note that $\beta > 0$ because 
\begin{align*}
    \max_{x'} &\left( \frac{P_{Y \mid X=x'}(y_1)}{P_Y(y_1)} -  \frac{P_{Y \mid X=x'}(y_2)}{P_Y(y_2)} \right)\\
    &\geq \max_{x'}  \frac{P_{Y \mid X=x'}(y_1)}{P_Y(y_1)} -  \max_{x} \frac{P_{Y \mid X=x}(y_2)}{P_Y(y_2)}\\
    &\geq \exp({\bar c})  - \exp(\ubar c) > 0. 
\end{align*}
Now, using the elementary bound $\log(1-t) \geq -t-t^2$ for $0 < t < \tfrac12$ in~\eqref{subeq:lower_bound_symbol} yields
\begin{equation*}
    \ell_{P_{XZ}}(X \to \bot) \geq \bar c-\eta\beta e^{-\bar c}+O(\eta^2) = \bar c-\eta \gamma +O(\eta^2), 
\end{equation*}
with $\gamma = \beta e^{-\bar c}$. Thus, by taking $\eta \to 0$, we can bring $ \ell_{P_{XZ}}(X \to \bot)$ arbitrarily close to $\bar c$.

The final step is to argue that $ \ubar \varepsilon_Z(\delta) \geq \bar c$. To show this, let $\cB' = \cB \setminus \{y_1\}$ and $\cG' = \cG \setminus \{y_2\}$ so that the alphabet of $Z$ can be represented by $\cZ = \cB' \cup \cG' \cup \{\bot, \diamond \}$. Let $\cA$ be an arbitrary subset of $\cZ$ with probability $P_Z (\cA) \geq 1-\delta$. Since $P_Z(\cG' \cup \{\diamond\}) = P_Y(\cG) - \eta P_Y(y_2) < 1- \delta$, any such set $\cA$ either intersects with $\cB'$ or contains $\bot$. If $\cA$ contains elements from $\cB'$, then 
\begin{equation*}
     \max_{z \in \cA} \ell_{P_{XZ}}(X \to z) \geq \min_{z \in \cB'} \ell_{P_{XZ}}(X \to z) \geq \bar c, 
\end{equation*}
and if $\bot \in \cA$, then 
\begin{equation*}
    \max_{z \in \cA} \ell_{P_{XZ}}(X \to z) \geq \ell_{P_{XZ}} (X \to \bot) \geq \bar c-\eta \gamma +O(\eta^2), 
\end{equation*}
and taking $\eta \to 0$ yields $\max_{z \in \cA} \ell_{P_{XZ}}(X \to z) \geq \bar c$. Hence, we have proved that 
\begin{align*}
    \ubar \varepsilon_Z(\delta) &= \min_{\mathcal{A} :P_Z(\mathcal{A}) \geq 1 - \delta} \; \max_{z \in \mathcal{A}} \, \ell(X \to z) \geq \bar c =  \bar \varepsilon_Y(\delta).
\end{align*}
This, in turn, implies that 
\begin{equation*}
    \sup_{Z : X - Y - Z} \ubar \varepsilon_Z(\delta) \geq \sup_{Z : X - Y - Z} \quad \bar \varepsilon_Z(\delta),
\end{equation*}
as desired.

%%%%%%%%%%%%%%%%%%%%%%
%%%%%%%%%%%%%%%%%%%%%%
\section{Measuring the Information Leaked to Events}
\label{sec:events}
Given an event $\cE \subseteq \cY$ with $P_Y(\cE) >0$ and a random variable $Z = \ind_{\mathcal{E}}(Y)$, we define 
\begin{equation*}
    \ell_{P_{XY}}(X \to \mathcal{E}) \coloneqq \ell_{P_{XZ}}(X \to 1).  
\end{equation*}
This definition is motivated by the observation that both deterministic and randomized post-processings can be naturally expressed in terms of indicator functions. For deterministic mappings, this is immediate: If $Z = h(Y)$ and $z \in \cZ$, then 
\begin{equation*}
    \ell_{P_{XZ}}(X \to z) = \ell_{P_{XY}}(X \to \cE_z),
\end{equation*}
where $\cE_z = \{y \in \cY : h(y) = z\}$ denotes the pre-image of $z$ under $h$.

We can also cast randomized mappings as deterministic ones by adopting a few formalisms from \cite{mciverAbstractChannelsTheir2014} and \cite{saeidian2023pointwise_it}. Given a privacy mechanism $P_{Y \mid X}$, we say that two outcomes $y, y'$ are \emph{similar} if there exists a constant $c>0$ such that $P_{Y \mid X=x}(y) = c P_{Y \mid X=x}(y')$ for all $x \in \cX$. Similar outcomes have the same information density $i(x;y) = i(x;y')$ for all $x \in \cX$, and induce the same posterior distributions $P_{X \mid Y=y} = P_{X \mid Y=y'}$. Consequently, \say{merging} similar outcomes (i.e., mapping similar outcomes to the same symbol) does not alter the distribution of $\ell(X \to Y)$. In \cite{mciverAbstractChannelsTheir2014}, the mechanism obtained by merging all similar outcomes is called the \emph{reduced mechanism}. As an example, for the mechanism $P_{Y \mid X}$ in~\eqref{eq:ex_mech}, outcomes $3$ and $4$ are similar, and its reduced form is 
\begin{equation*}
    P_{Y_r \mid X} = \begin{bmatrix}
    0 & 0 & 1\\[0.5em]
    0 & 0 & 1\\[0.5em]
    0 & 0.2 & 0.8\\[0.5em]
    0.2 & 0 & 0.8
    \end{bmatrix}. 
\end{equation*}

Next, we define an equivalence relation that unifies all mechanisms with the same reduced form. Let $[P_{Y\mid X}]$ denote the equivalence class of $P_{Y\mid X}$. A key advantage of introducing such equivalence classes is that a randomized post-processing applied to the outputs of $P_{Y\mid X}$ can alternatively be viewed as a deterministic post-processing applied to some mechanism in $[P_{Y\mid X}]$. While this idea can be established formally, we illustrate it with a simple example that readily extends to a general proof. Fix a mechanism $P_{Y\mid X}$ with binary output alphabet $\mathcal Y=\{0,1\}$, and consider a randomized post-processing $P_{Z\mid Y}$ of the form
\begin{equation*}
    P_{Z \mid Y} = \begin{bmatrix}
    \alpha & 1 - \alpha\\[0.5em]
    1 - \beta & \beta
    \end{bmatrix}, 
\end{equation*}
with $0<\alpha, \beta < 1$ and $\cZ = \{0,1\}$. Define a mechanism $P_{\tilde Y \mid X}$ with the output space $\tilde \cY = \{00, 01, 10, 11\}$ by 
\begin{gather*}
    P_{\tilde Y \mid X=x}(00) = \alpha P_{Y \mid X=x}(0),\\ P_{\tilde Y \mid X=x}(01) = (1 - \alpha) P_{Y \mid X=x}(0),\\
    P_{\tilde Y \mid X=x}(10) = (1-\beta) P_{Y \mid X=x}(1),\\ P_{\tilde Y \mid X=x}(11) = \beta P_{Y \mid X=x}(1),
\end{gather*}
for all $x$. Then, $P_{\tilde Y \mid X} \in [P_{Y \mid X}]$ and we have
\begin{gather*}
\begin{split}
    \ell_{P_{XZ}}(X \to 0) =  \ell_{P_{X \tilde Y}}(X \to \{00, 10\}),\\
    \ell_{P_{XZ}}(X \to 1) =  \ell_{P_{X \tilde Y}}(X \to \{01, 11\}), 
\end{split}
\end{gather*}
that is, $Z$ is a deterministic function of $\tilde Y$. Extending this construction to general post-processings shows that any randomized post-processing of $Y$ can be treated as a deterministic mapping applied to some mechanism in the equivalence class $[P_{Y\mid X}]$. 

Moreover, using equivalent classes, we interpret ``events'' in the generalized sense of applying an indicator function to some $\tilde Y$ induced by a mechanism $P_{\tilde Y \mid X} \in [P_{Y \mid X}]$. Concretely, we extend the notation by defining
\begin{equation}
\label{eq:generalized_event}
    \ell_{P_{XY}}(X\to \mathcal E) \coloneqq \ell_{P_{XZ}}(X\to 1),
\end{equation}
where $Z=\ind_{\mathcal E}(\tilde Y)$ for $\cE \subseteq \tilde \cY$ and $P_{\tilde Y\mid X}\in [P_{Y\mid X}]$.

Below, we establish some elementary properties of the map $\cE \mapsto \ell_{P_{XY}}(X\to \mathcal E)$. As a preliminary step, let us express $\ell_{P_{XY}}(X\to \mathcal E)$ in terms of the information density:
\begin{align}
    &\ell_{P_{XY}}(X \to \mathcal{E}) = \log \; \max_{x \in \cX} \; \frac{P_{Y \mid X=x}(\cE)}{P_Y(\cE)} \nonumber\\
    &= \log \; \max_{x \in \cX} \; \frac{\sum_{y \in \cE} P_{Y \mid X=x}(y)}{P_Y(\cE)} \nonumber\\
    &= \log \; \max_{x \in \cX} \; \frac{\sum_{y \in \cE} \exp \big(i(x;y)\big) \, P_Y(y)}{P_Y(\cE)} \nonumber\\
    % &= \log \; \max_{x \in \cX} \; \sum_{y \in \cE} \exp \big(i(x;y)\big) \, \frac{P_Y(y)}{P_Y(\cE)} \nonumber\\
    &= \log \; \max_{x \in \cX} \; \bE_{Y \sim Q_{\cE}} \Big[\exp \big(i(x;Y)\big)\Big], \label{eq:eml_as_expectation}
\end{align}
where $Q_{\cE}$ is the conditional distribution of $Y$ given $\cE$, that is, 
\begin{equation*}
    Q_{\cE}(y) = 
    \begin{cases}
        \frac{P_Y(y)}{P_Y(\cE)} & \text{ if } y \in \cE,\\[0.5em]
        0 & \text { otherwise.} 
    \end{cases}    
\end{equation*}

\begin{lemma}
\label{lemma:properties_event_leakage}
The function $\ell(X \to \cE)$ satisfies the following properties: 
\begin{enumerate}
    \item $0 \leq \ell(X \to \cE) \leq \log \frac{1}{P_Y(\cE)}$ for all $\cE \subseteq \cY$. 
    \item $\ell(X \to \cY) = 0$. 
    \item If $\cE \cap \cE' = \emptyset$, then
    \begin{equation*}
        \ell(X \to \cE \cup \cE') \leq \max \Big\{ \ell(X \to \cE), \ell(X \to \cE') \Big\}.
    \end{equation*}
    \item Suppose $P_Y(\cE) = \theta >0$. For each $0 < \theta' < \theta$, there exists an event $\cE' \subset \cE$ with probability $P_Y(\cE') = \theta'$ such that 
\begin{equation*}
    \ell(X \to \cE') \geq \ell(X \to \cE). 
\end{equation*}
\end{enumerate}    
\end{lemma}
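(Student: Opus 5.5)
All four items follow from the representation \eqref{eq:eml_as_expectation}, which writes $e^{\ell(X\to\cE)}=\max_x \bE_{Q_\cE}[\exp(i(x;Y))]$, i.e.\ $\max_x P_{Y\mid X=x}(\cE)/P_Y(\cE)$. I would also use the generalized notion of events (randomized membership) from \eqref{eq:generalized_event}: an event is a weight function $w:\cY\to[0,1]$, with $P_{Y\mid X=x}(\cE)=\sum_y w(y)P_{Y\mid X=x}(y)$ and $P_Y(\cE)=\sum_y w(y)P_Y(y)$. I would prove the items in order.

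For (i), the upper bound follows from $P_{Y\mid X=x}(\cE)\le 1$. For the lower bound, I would note that $P_Y(\cE)=\sum_x P_X(x)P_{Y\mid X=x}(\cE)$ is an average over $x$, so the maximum over $x$ of $P_{Y\mid X=x}(\cE)$ is at least $P_Y(\cE)$; this gives $\ell\ge 0$.

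Item (ii) is immediate, since $P_{Y\mid X=x}(\cY)=P_Y(\cY)=1$ for every $x$.

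For (iii), with $\cE\cap\cE'=\emptyset$, I would write
\[
\frac{P_{Y\mid X=x}(\cE\cup\cE')}{P_Y(\cE\cup\cE')}=\lambda\frac{P_{Y\mid X=x}(\cE)}{P_Y(\cE)}+(1-\lambda)\frac{P_{Y\mid X=x}(\cE')}{P_Y(\cE')},
\]
where $\lambda=P_Y(\cE)/P_Y(\cE\cup\cE')$. A convex combination is at most the larger term, and each term is bounded by its own maximum over $x$. Taking the maximum over $x$ and then logarithms gives (iii). If one of the events has probability zero, the statement is trivial.

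Item (iv) is the only one needing a construction, and I expect it to be the main point. Fix $x^*$ attaining the maximum for $\cE$, so $e^{\ell(X\to\cE)}=\bE_{Q_\cE}[e^{i(x^*;Y)}]$. Sort the outcomes in $\cE$ by decreasing $i(x^*;y)$, exactly as Algorithm~\ref{alg:epsb} does. Then let $\cE'$ take the top outcomes greedily until mass $\theta'$ is reached, including the boundary outcome only partially, which the generalized (randomized) event notion allows. This gives $P_Y(\cE')=\theta'$ exactly.

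The remaining claim is that the average of a function over its top-$\theta'$ portion, under the weights $P_Y$, is at least its average over the whole of $\cE$. This is a standard rearrangement (or ``upper tail average'') inequality. I would prove it by noting that the difference $\bE_{Q_{\cE'}}[f]-\bE_{Q_\cE}[f]$ equals $\bE_{Q_\cE}[f(Y)(w(Y)/\theta'-1/\theta)\theta]$. The weight $w/\theta'-1/\theta$ is positive exactly on the high-$f$ outcomes and sums to zero, so the difference is nonnegative. This gives
\[
\ell(X\to\cE')\ge \log \bE_{Q_{\cE'}}\big[e^{i(x^*;Y)}\big]\ge \ell(X\to\cE).
\]
The only subtlety is that exact mass $\theta'$ requires the randomized boundary outcome, so $\cE'$ must be read in the generalized sense rather than as a literal subset of $\cY$.
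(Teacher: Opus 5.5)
Your proposal is correct and matches the paper's proof. Items (i)--(iii) are the same direct computations, and your convex-combination form of (iii) is exactly the paper's mediant bound. For (iv), the paper makes the same construction: it takes the part of $\cE$ above a threshold $\tau$ on $\exp\big(i(x;y)\big)$ for the maximizing $x$, with a randomized boundary piece. The only cosmetic difference is that the paper finishes (iv) by reapplying the mediant bound to $\cE'$ and $\cE\setminus\cE'$, whose averages are $\geq\tau$ and $\leq\tau$ respectively, instead of your signed-weight rearrangement argument, which amounts to the same thing.
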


\begin{IEEEproof}
\begin{enumerate}
\item These bounds follow immediately from the definition of $\ell(X \to \cE)$ and were also noted in \cite{saeidian2023pointwise_it}.
\item 
\begin{align*}
    \ell_{P_{XY}}(X\to \mathcal Y) = \log \max_{x\in\mathcal X}\frac{P_{Y\mid X=x}(\mathcal Y)}{P_Y(\mathcal Y)} = \log\frac{1}{1} = 0.
\end{align*}
\item Suppose $\cE \cap \cE' = \emptyset$. We have 
\begin{align*}
    &\ell(X \to \cE \cup \cE') = \log \, \max_x \frac{P_{Y \mid X=x}(\cE \cup \cE')}{P_{Y}(\cE \cup \cE')}\\
    &= \log \, \max_x \frac{P_{Y \mid X=x}(\cE) + P_{Y \mid X=x}(\cE') }{P_{Y}(\cE) + P_{Y}(\cE')}\\
    &\leq \log \, \max_x \, \max \Big\{\frac{P_{Y \mid X=x}(\cE)  }{P_{Y}(\cE)}, \frac{P_{Y \mid X=x}(\cE')  }{P_{Y}(\cE')} \Big\}\\
    &= \max \Big\{ \ell(X \to \cE), \ell(X \to \cE') \Big\}.
\end{align*}

%%%%%%%%%%%%%%%%
\item 
Fix $x \in \cX$ satisfying 
\begin{equation*}
    \log \frac{P_{Y \mid X=x}(\cE)}{P_Y(\cE)} = \ell(X \to \cE). 
\end{equation*}
Choose $\tau >0$ so that 
\begin{align*}
    P_Y \{y \in \cE : \exp &\big(i(x;y)\big) > \tau \} \leq \theta'\\
    &\leq P_Y \{y \in \cE : \exp \big(i(x;y)\big) \geq \tau \}. 
\end{align*}
Let $\cA = \{y \in \cE : \exp \big(i(x;y)\big) > \tau \}$. If $P_Y(\cA) < \theta'$, choose a set $\cB \subseteq \{y \in \cE : \exp \big(i(x;y)\big) = \tau \}$ so that $P_Y(\cA) + P_Y(\cB) = \theta'$ and let $\cE' = \cA \cup \cB$.\footnote{To select such $\cB$, we might need to use some other mechanism in $[P_{Y \mid X}]$.} If $P_Y(\cA)  = \theta'$, then let $\cE' = \cA$. By construction, we have 
\begin{align*}
    \exp\big(i(x;y)\big) &\geq \tau, \quad y \in \cE', \\
    \exp\big(i(x;y)\big) &\leq \tau, \quad y \in \cE \setminus \cE'.  
\end{align*}

Now, we can write  
\begin{subequations}
\begin{align}
    &\ell(X \to \cE) = \log \,\frac{P_{Y \mid X=x}(\cE)}{P_Y(\cE)} \nonumber\\
    &= \log \, \frac{P_{Y \mid X=x}(\cE') + P_{Y \mid X=x}(\cE \setminus \cE')}{P_Y(\cE') + P_Y(\cE \setminus \cE')} \nonumber\\[0.5em]
    &\leq \log \, \max \Big\{\frac{P_{Y \mid X=x}(\cE')}{P_Y(\cE')}, \; \frac{P_{Y \mid X=x}(\cE \setminus \cE')}{P_Y(\cE \setminus \cE')} \Big\} \nonumber\\[0.5em]
    &= \log \, \frac{P_{Y \mid X=x}(\cE')}{P_Y(\cE')} \label{subeq:bigger_leakage_event} \\[0.5em] 
    &\leq \ell(X \to \cE'), \nonumber
\end{align}
\end{subequations}
where \eqref{subeq:bigger_leakage_event} follows because
\begin{align*}
    \frac{P_{Y \mid X=x}(\cE')}{P_Y(\cE')} &= \bE_{Y \sim Q_{\cE'}} \Big[\exp\big(i(x;Y)\big)\Big] \geq \tau,\\
    \frac{P_{Y \mid X=x}(\cE \setminus \cE')}{P_Y(\cE \setminus \cE')} &= \bE_{Y \sim Q_{\cE \setminus \cE'}} \Big[\exp\big(i(x;Y)\big)\Big] \leq \tau. 
\end{align*} 
\end{enumerate}
\end{IEEEproof}
We make some further remarks about the event-wise leakage. First, in general, the reverse of Lemma~\ref{lemma:properties_event_leakage}(iii) need not hold, i.e., one cannot claim that $\ell(X \to \cE \cup \cE')$ upper bounds $\min \Big\{ \ell(X \to \cE), \ell(X \to \cE') \Big\}$.

\begin{example}
\label{ex:event_properties}
Let $X$ be an unbiased Bernoulli random variable, and let $\mathcal Y=\{1,2,3\}$. Consider the mechanism
\begin{equation*}
    P_{Y\mid X}=
    \begin{bmatrix}
    0.9 & 0   & 0.1\\
    0   & 0.9 & 0.1
    \end{bmatrix},
\end{equation*}
which induces the marginal distribution $P_Y$ with 
\begin{equation*}
    P_Y(1)=0.45,\qquad P_Y(2)=0.45,\qquad P_Y(3)=0.1.
\end{equation*}
Let $\mathcal E=\{1\}$ and $\mathcal E'=\{2\}$. Then, we have 
\begin{align*}
    \ell(X\to \mathcal E) &= \ell(X\to \mathcal E')\\
    &=\log \max_{x\in\{0,1\}}\frac{P_{Y\mid X=x}(\{1\})}{P_Y(\{1\})}\\
    &=\log\frac{0.9}{0.45}\\
    &=\log 2.
\end{align*}
However, for their union $\mathcal E\cup \mathcal E'=\{1,2\}$ we have
\begin{equation*}
    \ell(X\to \mathcal E\cup \mathcal E') =\log \max_{x\in\{0,1\}}\frac{P_{Y\mid X=x}(\{1,2\})}{P_Y(\{1,2\})} =\log 1 =0.
\end{equation*}
Thus, $\ell(X\to \mathcal E\cup \mathcal E')=0<\min\{\ell(X\to \mathcal E),\,\ell(X\to \mathcal E')\}=\log 2$.
\end{example}

Second, the event-wise leakage is, in general, not monotone. That is, given events $\cE \subset \cE'$, either $\ell(X \to \cE)$ or $\ell(X \to \cE')$ can be larger. 
\begin{example}
Recall the setup of Example~\ref{ex:event_properties}. Let $\mathcal E=\{3\}$ and $\mathcal E'=\{1,3\}$, so that $\mathcal E\subset \mathcal E'$. Then, we have 
\begin{equation*}
    \ell(X\to \mathcal E) = \log\max_{x}\frac{P_{Y\mid X=x}(\{3\})}{P_Y(\{3\})} = \log\frac{0.1}{0.1} =0,
\end{equation*}
whereas
\begin{equation*}
    \ell(X\to \mathcal E')
    = \log\max_{x}\frac{P_{Y\mid X=x}(\{1,3\})}{P_Y(\{1,3\})}
    = \log\frac{1}{0.55}
    >0.
\end{equation*}

Now, let $\mathcal F=\{1\}$ and $\mathcal F'=\{1,3\}$, so that $\mathcal F\subset \mathcal F'$. We have
\begin{equation*}
    \ell(X\to \mathcal F) = \log 2,
\end{equation*}
while, as computed above,
\begin{equation*}
    \ell(X\to \mathcal F')=\log \left(\frac{1}{0.55}\right)<\log 2.
\end{equation*}
\end{example}

\subsection{Connection to Prior Work and Derivation of  Algorithm~\ref{alg:epsb}}
The binary envelope and the procedure in Algorithm~\ref{alg:epsb} are closely related to a privacy guarantee studied in \cite{saeidian2023pointwise_it}. In particular, \textcite{saeidian2023pointwise_it} introduced a guarantee based on the worst-case PML of post-processed outcomes with probability at least $\delta$, namely the quantity
\begin{equation}
\label{eq:old_work_eml}
    \sup_{Z:X-Y-Z} \; \max_{z\in\mathcal Z:P_Z(z)\geq \delta} \ell_{P_{XZ}}(X\to z).
\end{equation}
Then, they showed that \eqref{eq:old_work_eml} admits the equivalent formulation
\begin{equation}
\label{eq:eml}
    \sup_{\cE \subseteq \cY : P_Y(\cE) \geq \delta} \ell_{P_{XY}}(X \to \cE),  
\end{equation}
where the supremum is over all events in the generalized sense described above~\cite[Thm.~3]{saeidian2023pointwise_it}. It follows immediately from \eqref{eq:generalized_event} that both~\eqref{eq:old_work_eml} and \eqref{eq:eml} coincide with the binary PML envelope $\varepsilon_b$ defined in~\eqref{eq:binary_envelope}.

% It is straighforward to see that $\eqref{eq:old_work_eml}$ and $\varepsilon_b$ in \eqref{eq:binary_envelope}, observe that although \eqref{eq:old_work_eml} optimizes over arbitrary post-processings $P_{Z\mid Y}$ (possibly with non-binary output alphabet) the optimization in fact reduces to binary post-processings. Indeed, fix any $Z$ and let $z^\star$ attain the inner maximum in \eqref{eq:old_work_eml}. Define a binary post-processing $B$ by
% \begin{equation*}
%     P_{B\mid Y}(1\mid y)\coloneqq P_{Z\mid Y}(z^\star\mid y), \quad y\in\mathcal Y.
% \end{equation*}
% Then, $P_B(1)=P_Z(z^\star)\ge\delta$ and $P_{B\mid X=x}(1)=P_{Z\mid X=x}(z^\star)$ for all $x$, which implies
% \begin{equation*}
%     \ell_{P_{XB}}(X\to 1)=\ell_{P_{XZ}}(X\to z^\star).
% \end{equation*}
% Consequently, restricting the mechanism in \eqref{eq:old_work_eml} to those with binary outcomes is  without loss of optimality in. It then becomes evident that \eqref{eq:old_work_eml} coincides with the binary envelope $\varepsilon_b(\delta)$. 

\textcite{saeidian2023pointwise_it} also characterized the solution to \eqref{eq:old_work_eml} and \eqref{eq:eml}. Fix $x\in\mathcal X$ and order the outputs $y\in\mathcal Y$ in decreasing information density $i(x;y)$. Let $\mathcal F_k$ denote the set consisting of the first $k$ outputs in this ordering, and let $k^\star$ be the smallest index such that $P_Y(\mathcal F_{k^\star})\geq \delta$. If $P_Y(\mathcal F_{k^\star})>\delta$, the optimal construction uses randomization at the boundary output so that the selected event has probability exactly $\delta$. For each $x \in \cX$, this yields the value
\begin{equation}
\label{eq:eml_opt_val}
    \kappa(x) =\frac{1}{\delta}\Bigl(P_{Y\mid X=x}(\mathcal F_{k^\star-1}) +\zeta\,P_{Y\mid X=x}(y_{k^\star})\Bigr),
\end{equation}
where $\zeta\in(0,1]$ is chosen so that
$P_Y(\mathcal F_{k^\star-1})+\zeta P_Y(y_{k^\star})=\delta$.
Finally, the binary envelope is obtained as
\begin{equation*}
    \varepsilon_b(\delta)=\log \, \max_{x\in\mathcal X}\, \kappa(x).
\end{equation*}
Algorithm~\ref{alg:epsb} formalizes this procedure.

%%=====================%
%%=====================%
\subsection{Binary Envelope of Randomized Response}
\label{ssec:binary_envelope_krr}
Recall that $P_Y(y) = q_y$ for $y \in [k]$, 
\begin{equation*}
    q_1 \leq q_2 \leq \dots \leq q_k, 
\end{equation*}
and 
\begin{equation*}
    \sum_{y=1}^{N-1} q_y < \delta \le \sum_{y=1}^{N} q_y,
    \qquad
    \sum_{y=1}^{N-1} q_y + \theta q_N = \delta,
\end{equation*}
where $\delta \in (q_1, 1)$ and $N \geq 2$. To compute $\varepsilon_b$ for the $k$-RR mechanism, we use the equivalent formulation \eqref{eq:eml}, which leads to a simpler argument than applying Algorithm~\ref{alg:epsb}. Our goal is to compute the largest value of
\begin{equation*}
    \ell_{P_{XY}}(X \to \cE)
    = \log \max_{x \in [k]} \frac{P_{Y \mid X=x}(\cE)}{P_Y(\cE)},
\end{equation*}
over all (possibly randomized) events $\cE$ satisfying $P_Y(\cE) \geq \delta$. Note that by Lemma~\ref{lemma:properties_event_leakage}(iv), we can restrict attention to events with probability exactly $\delta$. 

Fix $x \in [k]$ and write
\begin{gather*}
    P_{Y \mid X=x}(\cE) = \sum_{y \in [k]} w_y P_{Y \mid X=x}(y),\\
    P_Y(\cE) = \sum_{y \in [k]} w_y P_Y(y),
\end{gather*}
where $w_y \in [0,1]$ denotes the probability that outcome $y$ is included in the event $\cE$. This yields the optimization problem
\begin{align*}
    \max_{x \in [k]} \ \max_{w \in [0,1]^k} \ \sum_{y \in [k]} w_y P_{Y \mid X=x}(y),\\
    \text{subject to } \sum_{y \in [k]} w_y q_y = \delta .
\end{align*}

We next use the following claim to upper bound the objective; its proof is deferred to the end of this section.

\smallskip
\noindent\underline{Claim:}
For all $w \in [0,1]^k$ satisfying $\sum_y w_y P_Y(y) = \delta$, we have
\begin{equation*}
    \sum_y w_y \le N-1 + \theta .
\end{equation*}

\smallskip
Recall that $P_{Y \mid X=x}(y) = \alpha$ if $y = x$ and $P_{Y \mid X=x}(y) = \beta$ otherwise. Fixing $x \in [k]$ and applying the claim, we obtain
\begin{align*}
    \sum_{y \in [k]} w_y P_{Y \mid X=x}(y)
    &= \alpha w_x + \beta \sum_{y \neq x} w_y\\
    &= (\alpha - \beta) w_x + \beta \sum_{y \in [k]} w_y\\
    &\le (\alpha - \beta) w_x + \beta (N-1 + \theta)\\
    &\le \alpha - \beta + \beta (N-1 + \theta)\\
    &= \alpha + \beta (N-2 + \theta).
\end{align*}

Observe that this upper bound is achieved by taking $x = 1$ and choosing
\begin{equation*}
    w_1 = \cdots = w_{N-1} = 1, 
    \qquad 
    w_N = \theta,
\end{equation*}
which is feasible by construction. We therefore conclude that
\begin{align*}
    \sup_{\cE : P_Y(\cE) \ge \delta} \, \ell_{P_{XY}}(X \to \cE)
    &= \log \max_{x \in [k]} \frac{P_{Y \mid X=x}(\cE)}{P_Y(\cE)}\\
    &= \log \frac{\alpha + \beta (N-2 + \theta)}{\delta}.
\end{align*}

The final piece is to prove the claim. Consider the optimization problem
\begin{gather*}
    \max_{w \in [0,1]^k} \ \sum_{y=1}^k w_y \\
    \text{subject to } \ \sum_{y=1}^k q_y w_y = \delta .
\end{gather*}
This is a linear program, hence it has an optimal solution at an extreme point of the feasible polytope. In particular, at an extreme point all but at most one coordinate of $w$ is in $\{0,1\}$.

Since each coordinate $w_y$ contributes equally to the objective, maximizing $\sum_y w_y$ amounts to setting as many $w_y$'s to 1 as possible. Because $q_1 \le q_2 \le \cdots \le q_k$, the optimal strategy is to set $w_1=1, w_2=1, \dots$ until the constraint $\sum_y w_y q_y = \delta$ is met. Thus, the choice
\begin{equation*}
    w_1 = \cdots = w_{N-1} = 1,\quad w_N=\theta, \quad w_{N+1}= \dots = w_k = 0,
\end{equation*}
is both feasible and optimal. Therefore, any feasible $w$ satisfies
\begin{equation*}
    \sum_{y=1}^k w_y \leq N-1+\theta,
\end{equation*}
as desired.

% In the third statement of Lemma~\ref{lemma:properties_event_leakage}, the parameter $\eta >0$ is required in case the supremum over $x$ in the expression of $\ell (X \to \cE)$ is not attained. If the supremum is attained, then we can take $\eta = 0$. 

\newpage
% \clearpage
\printbibliography

% that's all folks
\end{document}